\documentclass[11pt,letterpaper]{article}
 \usepackage{fullpage}
 \usepackage[T1]{fontenc}
 \usepackage{lmodern}
 
 \usepackage[utf8]{inputenc}
 \usepackage[english]{babel}
 \usepackage{tikz,amsmath,amssymb,amsfonts,latexsym,graphicx,amsthm,algorithmicx}
 \usepackage{subcaption}
 \usepackage{bbm}
 \usepackage{comment}
 \usepackage{thmtools}
 \usepackage{thm-restate}
 \usepackage{comment}
 \declaretheorem{theorem}
 \usepackage{algorithm}
 \usepackage[noend]{algpseudocode}
 \usepackage{hyperref}
 \usepackage[capitalise]{cleveref}
   % max fraction of floats at top
 \setcounter{topnumber}{5}
 \theoremstyle{plain}
 \newtheorem{lemma}[theorem]{Lemma}
 
 \newtheorem{fact}[theorem]{Fact}
 
 \newtheorem{remark}[theorem]{Remark}
 \theoremstyle{definition}
 \newtheorem{definition}[theorem]{Definition}

\bibliographystyle{alpha}

\bibliographystyle{plainurl}

\newcommand{\OPT}{\mathrm{OPT}}

\newcommand{\opt}{\mathrm{opt}}

\newcommand{\dist}{\mathrm{dist}}
\newcommand{\length}{\mathrm{length}}

\newcommand{\eps}{\varepsilon}
\newcommand{\biggamma}{\Gamma}

\title{An Approximation Algorithm for \\ Distance-Constrained Vehicle Routing on Trees}
%\titlerunning{An Approximation Algorithm for Distance-Constrained Vehicle Routing on Trees}

%\keywords{vehicle routing, distance constraint, approximation algorithms, trees}

%\nolinenumbers

\date{}
\author{Marc Dufay\thanks{IRIF and Ecole Polytechnique, France, e-mail: \texttt{marc.dufay@polytechnique.edu}} \and Claire Mathieu\thanks{CNRS Paris, France, e-mail: \texttt{claire.mathieu@irif.fr}.}  \and Hang Zhou\thanks{Ecole Polytechnique, France, e-mail: \texttt{hzhou@lix.polytechnique.fr}.}}

\usepackage{xcolor}
% for adjustwidth environment
\usepackage[strict]{changepage}
\usepackage{framed}
% environment derived from framed.sty: see leftbar environment definition
\definecolor{formalshade}{rgb}{0.95,0.95,1}
\definecolor{darkblue}{rgb}{0.0, 0.0, 0.55}
\newenvironment{myquote}{%http://www.jevon.org/wiki/Fancy_Quotation_Boxes_in_Latex
  \MakeFramed{\advance\hsize-\width\FrameRestore}%
  \noindent\hspace{-4.55pt}% disable indenting first paragraph
  \begin{adjustwidth}{}{7pt}%
  \vspace{2pt}\vspace{2pt}%
}
{%
  \vspace{2pt}\end{adjustwidth}\endMakeFramed%
}

\begin{document}

\maketitle

\begin{abstract}

In the Distance-constrained Vehicle Routing Problem (DVRP), we are given a  graph with integer edge weights, a depot, a set of $n$ terminals, and a distance constraint $D$. The goal is to find a minimum number of tours starting and ending at the depot such that those tours together cover all the terminals and the length of each tour is at most $D$.

%The DVRP is NP-hard even on trees.
The DVRP on \emph{trees} is of independent interest, because it is equivalent to the ``virtual machine packing'' problem on trees studied by Sindelar et al. [SPAA'11].
We design a simple and natural approximation algorithm for the tree DVRP, parameterized by $\varepsilon >0$. We show that its approximation ratio is $\alpha + \varepsilon$, where $\alpha \approx 1.691$, and in addition, that our analysis is essentially tight. The running time  is polynomial in $n$ and $D$. The  approximation ratio improves on the ratio of 2 due to Nagarajan and Ravi [Networks'12].

The main novelty of this paper lies in the analysis of the algorithm. It relies on a reduction from the tree DVRP to the bounded space online bin packing problem via a new notion of ``reduced length''.
\end{abstract}

\newpage{}
\setcounter{page}{1}

\section{Introduction}
The vehicle routing problem is arguably one of the most important problems in Operations Research.
Books have been dedicated to vehicle routing problems, e.g.,\ \cite{crainic2012fleet,golden2008vehicle,toth2002vehicle}.
Yet, these problems remain challenging, both from a practical and a theoretical perspective.
As observed by Li, Simchi-Levi, and Desrochers~\cite{li1992distance}:
\begin{myquote}
Typically, two types of constraints are imposed on the route traveled by any vehicle. One is the \emph{capacity constraint} in which each vehicle cannot serve more than a given number of customers.
The second is the \emph{distance constraint}: The total distance traveled by each vehicle should not exceed a prespecified number.
Depending on the system, one or both types can be binding.
Usually delivery and pick-up services are characterized by capacity constraints [\dots], while service systems are characterized by distance constraints (see, for example, \cite{assad}). On the latter case, the system is usually required to provide a visit of a skilled worker at customer sites and thus the length of routes must be controlled because these are related to working hours.
\end{myquote}

%The vehicle routing with a capacity constraint and when the objective is to minimize the total distance traveled has been studied extensively on general metrics~\cite{haimovich1985bounds,altinkemer1987heuristics,altinkemer1990heuristics,bompadre2006improved,blauth2021improving,friggstad2021improved} as well as in the special case when the graph is a tree~\cite{labbe1991capacitated,hamaguchi1998capacitated,asano2001new,becker2018tight,becker2019framework,jayaprakash2021approximation,MZ22a}.

%The vehicle routing with a distance constraint, called \emph{Distance-constrained Vehicle Routing Problem (DVRP)}, has many practical applications, see, e.g., ~\cite{li1992distance,assad,laporte1984two}.

The focus of this paper is the distance constraint.
In the \emph{Distance-constrained Vehicle Routing Problem (DVRP)},
%see, e.g.,~\cite{li1992distance,assad,laporte1984two,nagarajan2012approximation,friggstad2014approximation,becker2019framework}.
we are given a graph with integer edge weights, a vertex called \emph{depot}, a set of $n$ vertices called \emph{terminals}, and an integer \emph{distance constraint} $D$.
%,   which is polynomially bounded in the size of the graph.
The goal is to find a minimum number of tours starting and ending at the depot such that those tours together cover all the terminals and the length of each tour is at most $D$.
Friggstad and Swamy~\cite{friggstad2014approximation} gave an $O(\frac{\log D}{\log \log D})$-approximation, improving upon an $O(\log D)$-approximation of Nagarajan and Ravi~\cite{nagarajan2012approximation}.\footnote{More precisely, Nagarajan and Ravi~\cite{nagarajan2012approximation} designed a bicriteria $\left(O(\log \frac{1}{\eps}), 1+\eps\right)$-approximation, which could be turned into an $O(\log D)$-approximation by setting $D=\frac{1}{\eps}$.}
Experimental results were given in \cite{laporte1984two}.

The DVRP on \emph{trees} is of independent interest, because of its relation to the \emph{Virtual Machine (VM) packing problem}~\cite{sindelar2011sharing,barker2012empirical,rampersaud2016sharing}.
In the VM packing problem, we are given a set of VMs that must be hosted on physical servers, where each VM consists of a set of pages and each physical server has a capacity of $D$ pages.
VMs running on the same physical server may share pages.
The goal is to pack the VMs onto the smallest number of physical servers.
Sindelar et al.~\cite{sindelar2011sharing} observed:
\begin{myquote}
Using memory traces for a mixture of diverse OSes, architectures, and software libraries, we find that \emph{a tree model} can capture up to 67\% of inter-VM sharing from these traces.
\end{myquote}
\noindent Sindelar et al.~\cite{sindelar2011sharing} gave a 3-approximation for the VM packing problem on trees, and also suggested as future work ``\emph{A key direction is tightening the approximation bounds}''.

It is easy to see that the VM packing problem on trees is equivalent to the DVRP on trees.
Thus the algorithm of Sindelar et al.~\cite{sindelar2011sharing} is a 3-approximation for the tree DVRP.
Nagarajan and Ravi~\cite{nagarajan2012approximation} improved the ratio of the tree DVRP to $2$.
When the distance bound is allowed to be violated by an $\eps$ fraction, Becker and Paul~\cite{becker2019framework} designed a bicriteria PTAS.
We observe that the tree DVRP is strongly NP-hard (\cref{sec:NP-hard}).
%by a reduction from the strongly NP-hard bin-packing problem.
%It is open whether there is a true PTAS for tree DVRP.

In this work, we design a simple and natural approximation algorithm for the tree DVRP, parameterized by $\epsilon >0$, see \cref{alg:main}.
The main novelty lies in the analysis of \cref{alg:main}.
Our main result (\cref{thm:alpha-approx}) shows that the approximation ratio of \cref{alg:main} is $\alpha + O(\epsilon)$, where $\alpha \approx 1.691$ is defined in \cref{def:alpha}.
The running time is polynomial in $n$ and $D$.
Interestingly, the ratio $\alpha$ is best possible for \cref{alg:main} (\cref{thm:lower-bound}).

%Our algorithm (\cref{alg:main}) is essentially a reduction to instances for which the optimal number of tours is $O(1)$.
%That special case can be solved exactly (\cref{lem:exact-algo}).

\begin{definition}
\label{def:alpha}
Let $(u_k)_{k\geq 1}$ denote the following sequence:
\[u_k=\left\{
\begin{array}{lll}
1,   & k=1,\\
u_{k-1} ( u_{k-1} + 1), & k\geq 2.
\end{array}
\right. \]
\[
\text{ Let } \alpha := \sum_{k=1}^{+\infty} \frac{1}{u_k} = 1.69103\dots.
\]
\end{definition}

\begin{theorem}
\label{thm:alpha-approx}
For any constant $\eps>0$, \cref{alg:main} is an $(\alpha + O(\epsilon))$-approximation  for the \emph{Distance-constrained Vehicle Routing Problem (DVRP)} on trees.
Its running time is $O\left(n^2\cdot \left( D / \epsilon^2 \right)^{O(1/\epsilon^2)}\right)$.
\end{theorem}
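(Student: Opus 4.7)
The plan is to bound the output of \cref{alg:main} by reducing to the bounded space online bin packing problem, whose optimal competitive ratio is precisely $\alpha$, achieved by the Harmonic algorithm of Lee and Lee. The suggestive form of the running time, $O\!\left(n^2\cdot (D/\epsilon^2)^{O(1/\epsilon^2)}\right)$, indicates that \cref{alg:main} should be a bottom-up tree dynamic program which, at each internal vertex, maintains only $O(1/\epsilon)$ ``open'' partial tours, one per size class, with residual capacities discretized to multiples of $\epsilon^2 D$. The bin packing correspondence then identifies each such open tour with a bin, and each merge decision taken by the DP when climbing to a parent with an online packing decision; crucially, only $O(1/\epsilon)$ bins are ever open simultaneously at a vertex, which is exactly the \emph{bounded space} regime.

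The conceptual device that makes the reduction go through is the \emph{reduced length} promised in the abstract. For a tour $T$ whose support lies inside the subtree $T_v$ rooted at a vertex $v$, I would define its reduced length as $\length(T)-2\dist(r,v)$, where $r$ is the depot; this captures the portion of $D$ that $T$ consumes beyond what it could share with other tours serving $T_v$. Equipped with this definition, at any moment during the sweep, the open tours at $v$ look exactly like bins of capacity $D-2\dist(r,v)$ being filled with items whose sizes are the reduced lengths of the sub-tours they absorb. I would then prove two matching inequalities: (i) the number of tours output by \cref{alg:main} is at most $(1+O(\epsilon))$ times the bin count produced by the Harmonic-like strategy underlying the DP on the reduced-length instance, and (ii) every feasible DVRP solution induces a valid (possibly fractional) packing of the same reduced-length items into bins of the appropriate capacity, so that $\OPT$ lower-bounds the reduced-length bin packing optimum. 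Combining (i), (ii), and the Lee--Lee $\alpha$-competitive bound yields $\SOL\leq(\alpha+O(\epsilon))\cdot\OPT$, and the running time follows directly from the size of the DP state space.

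The main obstacle, in my view, lies in the lower bound direction of the reduction: showing that $\OPT$ dominates the bin packing optimum on the reduced-length instance. The subtlety is that an optimal tour may weave through several sibling subtrees before returning to the depot, so its contribution to the reduced length observed at a given vertex $v$ is not additive and depends on how the optimal tour is cut and reassembled along the root-to-$v$ path. Overcoming this will likely require a careful charging argument at each vertex $v$: one fractionally assigns each optimal tour crossing $v$ to a bin according to the share of its capacity spent strictly inside $T_v$, and then uses the tree's additivity of distances together with the definition of reduced length to argue that the total fractional mass accounts for the full reduced-length volume inside $T_v$. A secondary, more technical difficulty is ensuring that the $O(\epsilon)$ slacks coming from (a) the $\epsilon^2 D$ rounding of residual capacities, (b) the separate handling of sub-tours of reduced length at most $\epsilon D$, and (c) the asymptotic nature of the Lee--Lee ratio, all aggregate into a single multiplicative $O(\epsilon)$ loss rather than accumulating across tree levels.
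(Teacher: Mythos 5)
You have correctly guessed the two external ingredients (a notion of reduced length and the Lee--Lee bounded-space online bin packing bound), but the architecture of your argument does not match \cref{alg:main} and has gaps that would be hard to repair. First, \cref{alg:main} is not a bottom-up sweep that maintains $O(1/\eps)$ open partial tours and makes Harmonic-like packing decisions; it partitions the tree into components, each coverable by at most $\Gamma=1/\eps^2$ tours (\cref{lem:comp_dec}), and then solves each component \emph{exactly} by a dynamic program. The bin packing enters only in the analysis, and in the opposite direction from the one you propose: one takes the reduced lengths of the subtours of $\OPT$, orders them so that items from the same component and category are consecutive, feeds them to a bounded-space packing algorithm, and converts the resulting bins back into a \emph{feasible component-local DVRP solution} (\cref{lemma:single-tour}); since the algorithm is optimal within each component, its output is at most the size of that solution. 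The direction you flag as the ``main obstacle'' --- that $\OPT$ dominates the packing optimum --- is in fact the easy direction (\cref{thm:comp_sub}: the reduced lengths of the subtours of a single tour sum to at most $1$). The genuinely delicate step, which your plan omits, is the converse conversion: a bin may mix items from different components, and splitting such bins costs extra tours; the bound of $M$ on the number of open bins together with the consecutive ordering limits this loss to $O(mM)$, where $m$ is the number of components, and one then needs $m = O(\eps^2)\cdot\opt$ (\cref{lem:number-components}), which is exactly what the component decomposition with parameter $\Gamma=1/\eps^2$ buys. Without the decomposition, your per-vertex charging would accumulate losses across tree levels, which is precisely the unresolved worry at the end of your proposal.

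Second, your reduced length $\length(T)-2\dist(r,v)$ is not normalized, so the induced packing instance has bins of varying capacities $D-2\dist(r,v)$, and the Lee--Lee theorem (stated for unit bins) does not apply. The paper's definition divides by the residual budget, e.g.\ $\bar{\ell}(s)=\length(s)/(D-2\dist(r,r_c))$ for an ending subtour, so that all item sizes lie in $[0,1]$ and both key inequalities (sum at most $1$ per tour, and recombination of same-component items into a single tour of length at most $D$) hold simultaneously. Finally, you ignore the distinction between ending and passing subtours: two subtours that both traverse a component to its exit vertex share the $r_c$-to-$e_c$ path, and only subtours of the same category can be merged; this forces a separate reduced-length formula for passing subtours (subtracting $2\dist(r_c,e_c)$ and normalizing by $D-2\dist(r,e_c)$) and a category-respecting ordering of the item sequence.
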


\begin{remark}
It is common in vehicle routing to parameterize the running time of an algorithm by the value of a constraint. For example, in capacitated vehicle routing with splittable demands, the running time of the algorithms in~\cite{jayaprakash2021approximation,MZ22a} is parameterized by the \emph{tour capacity}.
\end{remark}

\begin{theorem}
\label{thm:lower-bound}
For any constant $\eps>0$,
%For any $\mu > 0$, there exists an instance $\mathcal{I}$ such that the approximation ratio of Algorithm~\ref{alg:main} on $\mathcal{I}$ is at least $\alpha-\mu$.
\cref{alg:main} is at best an $\alpha$-approximation for the \emph{Distance-constrained Vehicle Routing Problem (DVRP)} on trees.
\end{theorem}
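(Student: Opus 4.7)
The plan is to exhibit, for every integer $K \geq 1$ and every large $n$, a tree DVRP instance on which \cref{alg:main} uses at least $\bigl(\sum_{k=1}^K 1/u_k\bigr)\cdot (1 - o(1))$ times as many tours as the optimum; letting $K \to \infty$ then yields the claimed $\alpha$ lower bound. The construction is the tree DVRP analogue of the classical Lee--Lee lower bound for bounded space online bin packing, leveraging the same reduction that drives the upper bound analysis, but run ``in reverse''.

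First, I would fix a small $\delta>0$ and build the following tree. The depot $r$ has $K$ groups of pendant branches; group $k$ consists of $n$ identical branches, each carrying a single terminal at distance $D/(2(u_k+1)) + \delta/2$ from $r$. Thus any tour serving only terminals from group $k$ contains at most $u_k$ of them (since serving $u_k + 1$ such terminals would already exceed~$D$), while a single tour visiting exactly one terminal in each of the $K$ groups has length
\[
2\sum_{k=1}^K\!\Bigl(\tfrac{D}{2(u_k+1)} + \tfrac{\delta}{2}\Bigr) \;=\; D\Bigl(1 - \tfrac{1}{u_{K+1}}\Bigr) + K\delta,
\]
which is strictly below $D$ once $\delta$ is small enough. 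The telescoping identity used here, $\sum_{k=1}^K 1/(u_k+1) = 1 - 1/u_{K+1}$, follows from $1/(u_k+1) = 1/u_k - 1/u_{k+1}$, itself a direct consequence of the recurrence $u_{k+1} = u_k(u_k+1)$.

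Next I would analyze the two sides of the ratio. On the optimum side, the tour described above shows that $\OPT \leq n$, since we can pair up the terminals of the $K$ groups into $n$ balanced tours. On the algorithm side, the goal is to show that on this input, \cref{alg:main} processes the groups essentially one at a time and, within each group~$k$, behaves like a bounded-space bin-packing procedure that opens a fresh tour once the running length exceeds $D - D/(u_k+1)$; consequently it packs at most $u_k$ terminals of group $k$ per tour and uses at least $\lceil n/u_k \rceil$ tours for that group. Summing over $k$, the algorithm produces at least $\sum_{k=1}^K \lceil n/u_k \rceil \geq n\sum_{k=1}^K 1/u_k$ tours, so the ratio approaches $\sum_{k=1}^K 1/u_k$, which tends to $\alpha$ as $K$ grows.

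The main obstacle will be the last step, namely proving that on this instance \cref{alg:main} does treat the groups separately rather than merging terminals from different groups into the same tour. This requires exploiting the specific bucketing and ordering built into \cref{alg:main}: the length classes $D/(u_k+1) + \delta$, $k = 1,\dots,K$, must fall into distinct rounded buckets, and the bounded number of simultaneously ``open'' tours in the algorithm must be smaller than $K$, so that the algorithm is forced to close all tours devoted to group $k$ before starting group $k-1$. Making this forcing argument precise---in particular ruling out the possibility that the algorithm fortuitously combines terminals across groups and matching the analysis exactly so that the constant $\alpha$ (and not a strictly smaller quantity) is attained---will require a careful, instance-specific inspection of each rule of \cref{alg:main}, and this is where the bulk of the technical work lies.
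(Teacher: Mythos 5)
Your construction has the right arithmetic skeleton---terminal distances of roughly $D/(2(u_k+1))$ per group, the telescoping identity $\sum_{k}1/(u_k+1)=1-1/u_{K+1}$, an optimum of about $n$ mixed tours versus $\sum_k n/u_k$ single-group tours---and this matches the paper's instance in spirit. But the step you defer to the end, ``proving that \cref{alg:main} does treat the groups separately,'' is the entire content of the lower bound, and your plan for closing it rests on a mischaracterization of the algorithm. \cref{alg:main} has no rounded length buckets, no ordering of items, and no bound on ``simultaneously open tours''; the bounded-space online bin packing appears only as an analytical device in the \emph{upper-bound} proof. What \cref{alg:main} actually does is (i) partition the tree into components via \cref{alg:components} and (ii) solve each component \emph{exactly} by dynamic programming. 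Consequently, on your instance---where every terminal hangs on its own pendant branch directly below the depot---the component decomposition is free to place terminals from different groups into the same component, and the exact per-component DP would then recover the mixed tours, driving the ratio back toward $1$. Nothing in your construction prevents this, so the forcing argument cannot be completed as described.

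The paper closes this gap structurally rather than behaviorally: each ``group'' is realized as many disjoint \emph{star} components, each attached to the depot by a weight-$0$ edge, where a type-$i$ star has exactly $\Gamma\cdot u_i$ terminals at integer distance $x_i = k\cdot u_{k+1}/(u_i+1)+1$. Each star is coverable by exactly $\Gamma$ tours (a single tour fits $u_i$ but not $u_i+1$ of its terminals), so \cref{alg:components} necessarily makes each star its own leaf component, and the exact DP is then provably forced to spend $\Gamma$ tours per star; summing gives $\Gamma u_k\sum_i 1/u_i$ against $\opt\le\Gamma u_k$. To repair your proof you would need the analogous move: choose the multiplicities and the attachment of the branches so that the component decomposition provably isolates the groups, rather than arguing about bucketing rules that \cref{alg:main} does not have. (Two smaller issues: your edge weights $D/(2(u_k+1))+\delta/2$ need not be integers as the problem definition requires, and you should verify, as the paper does, that a component's local optimum really is $\lceil(\text{number of terminals})/u_i\rceil$ tours.)
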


It is an open question to achieve a better-than-$\alpha$ approximation for the DVRP on trees.
From \cref{thm:lower-bound}, this would require new insights in the algorithmic design.

\subsection{Algorithm}
\label{sec:main-algo}
Let $\eps > 0$.
 Our algorithm for the DVRP is \cref{alg:main}.
 It consists of two phases, using \cref{alg:components} and \cref{alg:local}  with  $\biggamma=1/\eps^2$:
\begin{description}
    \item[Phase 1:] The tree is partitioned into \emph{components} (\cref{lem:comp_dec}  and \cref{alg:components}), where each component can be covered with a bounded  number of tours.
    \item[Phase 2:] Each component is taken as an independent instance for the DVRP, which is solved using a polynomial time dynamic program (\cref{lem:exact-algo} and \cref{alg:local}); the solution for the whole tree is the union of the solutions for  individual components.
\end{description}

\begin{algorithm}[h]
\caption{Approximation algorithm for the DVRP on trees. Parameter $\eps >0$.}
\label{alg:main}
\hspace*{\algorithmicindent} \textbf{Input:} A tree $T$ rooted at $r$, a set of terminals $U$, a distance constraint $D$ \\
\hspace*{\algorithmicindent} \textbf{Output:} number of tours in a feasible solution to cover all terminals in $U$
\begin{algorithmic}[1]
\State $\Gamma \gets 1/\epsilon^2$
\State Partition the tree $T$ into a set $\mathcal{C}$ of components
\Comment{\cref{alg:components}}
\For{each component $c\in \mathcal{C}$}
    \State $r_c\gets$ root vertex of component $c$ \Comment{defined in \cref{lem:comp_dec}}
    \State $D_c\gets D$ minus twice the distance between $r$ and $r_c$
    \State $U_c\gets$ set of terminals in $U$ that belong to $c$
    \State $n_c\gets$ minimum number of tours for the subproblem $(c,U_c,D_c)$\Comment{\cref{alg:local}}
\EndFor
\State \Return $\sum_{c\in \mathcal{C}} n_c$
\end{algorithmic}
\end{algorithm}

\begin{remark}
As written, \cref{alg:main} returns the number of tours, not the tours themselves. If desired, by adding auxiliary information to the dynamic program of \cref{alg:local} in a standard manner, it is possible to retrieve a feasible solution whose number of tours matches the value returned by \cref{alg:main}.
\end{remark}

\subsection{Overview of the Analysis}
\label{sec:analysis}
This section gives a high-level description of main new ideas in this paper.

The analysis of \cref{alg:main} relies on a reduction (\cref{thm:reduction}) from the DVRP on trees to the \emph{bounded space online bin packing} (\cref{def:online-packing}).

What does bin packing have to do with DVRP on trees?
The relation lies in a new notion introduced in this paper, that of \emph{reduced lengths} (\cref{def:reduced-weight}).
If we consider a bin in bin packing, its item sizes sum to at most 1. Similarly, if we consider a tour in DVRP on trees, the reduced lengths of its subtours in components sum to at most 1 (\cref{thm:comp_sub}).

To show the reduction (\cref{thm:reduction}), we start from an instance of the tree DVRP and we construct an instance of the bounded space online bin packing as follows.
Consider the reduced lengths of all subtours in an optimal solution.
We construct an online sequence of those reduced lengths such that reduced lengths of the subtours in the same component are \emph{consecutive} in the sequence.
Then we consider a solution to the bounded space online bin packing on that sequence.
Intuitively, the bound on the number of open bins implies that bins in that solution tend to contain only reduced lengths of the subtours from the same component. That is a desirable property because, if a bin contains only  reduced lengths of subtours in the same component, then those subtours can be combined into a single tour (\cref{lemma:single-tour}).
Using the above intuition, we show that the performance of \cref{alg:main} for the tree DVRP is up to some negligible additive factor at least as good as the best performance for the bounded space online bin-packing problem.

From the reduction (\cref{thm:reduction}) and since the bounded space online bin packing admits an $\alpha$-competitive algorithm due to Lee and Lee~\cite{leeleeonlinepacking}, we conclude that  \cref{alg:main} is an $(\alpha+O(\eps))$ approximation for the DVRP on trees (\cref{thm:alpha-approx}).

There are several technical details along the way. For example, subtours that end in a component and subtours that traverse a component cannot be combined in the same way, which requires additional care in the definition of reduced lengths, see \cref{fig:reduced_length}.

Finally, we show that the analysis in \cref{thm:alpha-approx} on the approximation ratio of \cref{alg:main}  is essentially tight by providing a matching lower bound (\cref{thm:lower-bound}).

\subsection{Organization of the Paper}
In \cref{sec:preli}, we give the formal problem definition, preliminary results, and previous techniques.
In \cref{sec:reduced}, we define and analyze \emph{reduced lengths}.
In \cref{sec:proof-alpha-approx}, we prove \cref{thm:alpha-approx} by establishing the reduction (\cref{thm:reduction}).
In \cref{sec:lower-bound}, we prove \cref{thm:lower-bound}.

\section{Preliminaries}
\label{sec:preli}
\subsection{Formal Problem Definition, Notations, and Assumptions}
\label{sec:problem}
Let $T$ be a rooted tree $(V,E)$ with non-negative integer edge weights $w(u,v)$ for all $(u,v)\in E$.
Consider a tour (resp.\ subtour) $t=(v_0,v_1,v_2,\dots,v_m)$ for some $m\in \mathbb{N}$ such that $v_0=v_m$.
The \emph{length} of $t$, denoted by $\length(t)$, is defined to be $\sum_{i=1}^{m} w(e_i)$, where $e_i$ is the edge between $v_{i-1}$ and $v_{i}$.

%We say that a tour \emph{visits} a terminal if it is part of the tour.
\begin{definition}[tree DVRP]
\label{def:tree-dvrp}
An instance $(T,U,D)$ of the \emph{Distance-constrained Vehicle Routing Problem (DVRP)} on \emph{trees} consists of
\begin{itemize}
    \item a rooted \emph{tree} $T=(V,E)$ with  non-negative integer edge weights, where the {root} $r\in V$ of the tree is  the \emph{depot};
    \item
    a set $U\subseteq V$ of $n$ vertices of the tree $T$, called \emph{terminals};
    \item a positive integer \emph{distance constraint} $D$.
\end{itemize}
A feasible solution is a set of tours such that
\begin{itemize}
    \item each tour starts and ends at $r$;
    \item each terminal is visited by at least one tour;
    \item each tour has length at most $D$.
\end{itemize}
The goal is to minimize the total number of tours in a feasible solution.
\end{definition}
Let $\OPT$ denote an optimal solution to the tree DVRP.
Let $\opt$ denote the number of tours in $\OPT$.

For any vertices $u,v\in V$, let $\dist(u,v)$ denote the \emph{distance} between $u$ and $v$ in the tree $T$.

Up to a preprocessing step, we can assume that each vertex has at most two children and that the terminals are the same as the leaves of the tree, see, e.g.,~\cite{MZ22a}. Furthermore, we assume that each non-leaf vertex has exactly two children.\footnote{Indeed, if a vertex $u$ has only one child $v$, there are two cases. In the first case, $u$ is the root. Then we remove $u$ and its incident edge, let $v$ be the depot, and update $D$ by $D-2 w(u,v)$. In the second case, $u$ has a parent $p$. Then we remove $u$ and its incident edges, and we add an edge between $p$ and $v$ with weight $w(p,v):=w(p,u) + w(u,v)$.}
We assume that  for each tour in the solution, each edge on that tour is traversed exactly twice, once in each direction.
We assume w.l.o.g. that $\eps > 0$ is upper bounded by a sufficiently small constant.

\subsection{Decomposition into Components}
We decompose the tree into components in \cref{lem:comp_dec}.

\begin{lemma}[\cite{MZ22a}]
\label{lem:comp_dec}
Let $\biggamma\geq 1$ be a fixed integer.
There is an algorithm \textsc{Decompose}$^{(\Gamma)}$ (Algorithm~\ref{alg:components}) that runs in time $O(n^2\cdot (2\Gamma)^{\Gamma}\cdot D^{3\Gamma})$ and computes
a partition of the edges of the tree $T$ into a set $\mathcal{C}$ of \emph{components} (see \cref{fig:component}), such that all of the following properties are satisfied:
\begin{enumerate}
    \item Every component $c\in \mathcal{C}$ is a connected subgraph of $T$;
    the \emph{root} vertex of the component $c$, denoted by $r_c$, is the vertex in $c$ that is closest to the depot.
    \item We say that a component $c\in \mathcal{C}$ is a \emph{leaf} component if all descendants of $r_c$ in tree $T$ are in $c$, and
    an \emph{internal} component otherwise.
    An internal component $c$ shares vertices with other components at two vertices only: at vertex $r_c$, and at one other vertex, called the \emph{exit} vertex of the component $c$, and denoted by $e_c$.
    \item The terminals of each component can be covered by at most $\biggamma$ tours.
    We say that a component is \emph{big} if at least $\biggamma/2$ tours are needed to cover its terminals.
    Each leaf component is big.
    \item If the number of components in $\mathcal{C}$ is strictly greater than 1, then there exists a map from all components to big components, such that each big component has at most three pre-images.
\end{enumerate}
\end{lemma}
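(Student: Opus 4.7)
The plan is to run a bottom-up dynamic program on $T$ and use it to drive a greedy cutting procedure that produces the components. At each vertex $v$, the DP computes the minimum number of tours needed to cover the terminals of the subtree rooted at $v$, parameterized by the multiset of at most $\Gamma$ partial tour ``stubs'' rising from $v$ toward the depot. Since each stub is an integer in $\{0,1,\dots,D\}$, the state space at each vertex has size $O((2\Gamma)^\Gamma \cdot D^{3\Gamma})$, which explains the main factor in the running time.

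Using this DP, I would partition $T$ greedily by a postorder traversal. As I traverse, I accumulate each processed subtree into a pending region at the current vertex. When the pending region's number of required tours first reaches $\Gamma/2$ but a further accumulation would exceed $\Gamma$, I cut along the appropriate edge and freeze the pending region as a component. This yields a partition of the edges of $T$ whose components each need between $\Gamma/2$ and $\Gamma$ tours, except possibly for a single residual component abutting the depot. Properties 1 and 2 follow directly: each cut introduces exactly one shared vertex (the exit vertex $e_c$) on top of the root $r_c$ already shared with the parent component. Property 3 follows because every leaf component, by virtue of being fully accumulated before the next cut, contains at least $\Gamma/2$ tours' worth of terminals.

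The main obstacle is property 4, the $3$-to-$1$ charging from components to big components. I would form the tree-of-components $\mathcal{T}_\mathcal{C}$ whose leaves, being leaf components of the original decomposition, are automatically big; since $T$ is binarized, $\mathcal{T}_\mathcal{C}$ has branching factor at most $2$. The cutting rule should be designed so that whenever a non-big internal component is produced, at least one of its children in $\mathcal{T}_\mathcal{C}$ is big — otherwise accumulation could have continued further before cutting. I would then define the map as follows: each big component is its own preimage, and each non-big component is routed to a big descendant by greedily following the edge of $\mathcal{T}_\mathcal{C}$ into a big subtree. Since $\mathcal{T}_\mathcal{C}$ is binary, a given big component $b$ can be charged only by itself and by at most two non-big ancestors that route through $b$, yielding at most three preimages. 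The subtle point is verifying that the cutting rule really prevents two consecutive non-big components from chaining; this would follow from the fact that the cut is made at the first moment the threshold $\Gamma/2$ is crossed.

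For the running time, the DP at each vertex merges the two children's state vectors (a convolution over stub multisets and the edge weight), costing $O((2\Gamma)^\Gamma \cdot D^{3\Gamma})$ per vertex; the cutting phase contributes an additional $O(n)$ overhead per vertex for re-evaluating candidate cut edges along the current root-to-leaf path. Summing over $n$ vertices gives the claimed bound $O(n^2 \cdot (2\Gamma)^\Gamma \cdot D^{3\Gamma})$.
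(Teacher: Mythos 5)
The paper does not actually prove this lemma: it states that the decomposition is taken from \cite{MZ22a} and that ``the proof is almost identical to that in \cite{MZ22a}, hence omitted.'' What the paper does supply is the concrete procedure (\cref{alg:components}), and it is structured quite differently from yours: it first extracts the \emph{leaf} components as the maximal subtrees $T(v)$ coverable by at most $\Gamma$ tours (taking $v$ least deep with this property), then forms the tree $T'$ spanning the depot and the leaf-component roots, and carves internal components greedily bottom-up \emph{only along the degree-two paths of $T'$}, each internal component being a maximal piece of the form $T(v)\setminus T(v_2)$ coverable by at most $\Gamma$ tours. This two-stage structure is not incidental; it is what makes Property~2 true.

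That is the first genuine gap in your proposal: your postorder accumulation does not guarantee the single-exit property. At a branching vertex $v$ of $T$, your pending region can contain a residue of the left subtree (with an exit vertex where a cut was made inside it) together with a residue of the right subtree (with a second exit vertex), so the frozen component shares vertices with other components at three or more places, violating Property~2. The second gap is Property~4, which you flag but do not close. As stated, your routing rule can charge a single big component far more than three times: a chain of non-big ancestors in $\mathcal{T}_{\mathcal{C}}$, each routing ``into a big subtree'' through the next, all terminate at the same big descendant. The argument that actually works uses maximality of the greedy cut plus subadditivity of the tour count: if $T(v)\setminus T(v_2)$ is the component carved with $v$ least deep, then $T(\mathrm{parent}(v))\setminus T(v_2)$ needs more than $\Gamma$ tours, and since the union of two adjacent components on the path can be covered by the sum of their tour counts, two \emph{consecutive} components on a path cannot both need fewer than $\Gamma/2$ tours; this is what bounds the preimages by three (and is also what underlies ``each leaf component is big'', for which your justification --- ``by virtue of being fully accumulated before the next cut'' --- is circular). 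Your cut rule of ``first reaching $\Gamma/2$'' is not the rule that makes this maximality argument go through, so as written the proposal does not establish Properties~2--4.
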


\begin{figure}[t]
    \centering
    \includegraphics[scale=0.4]{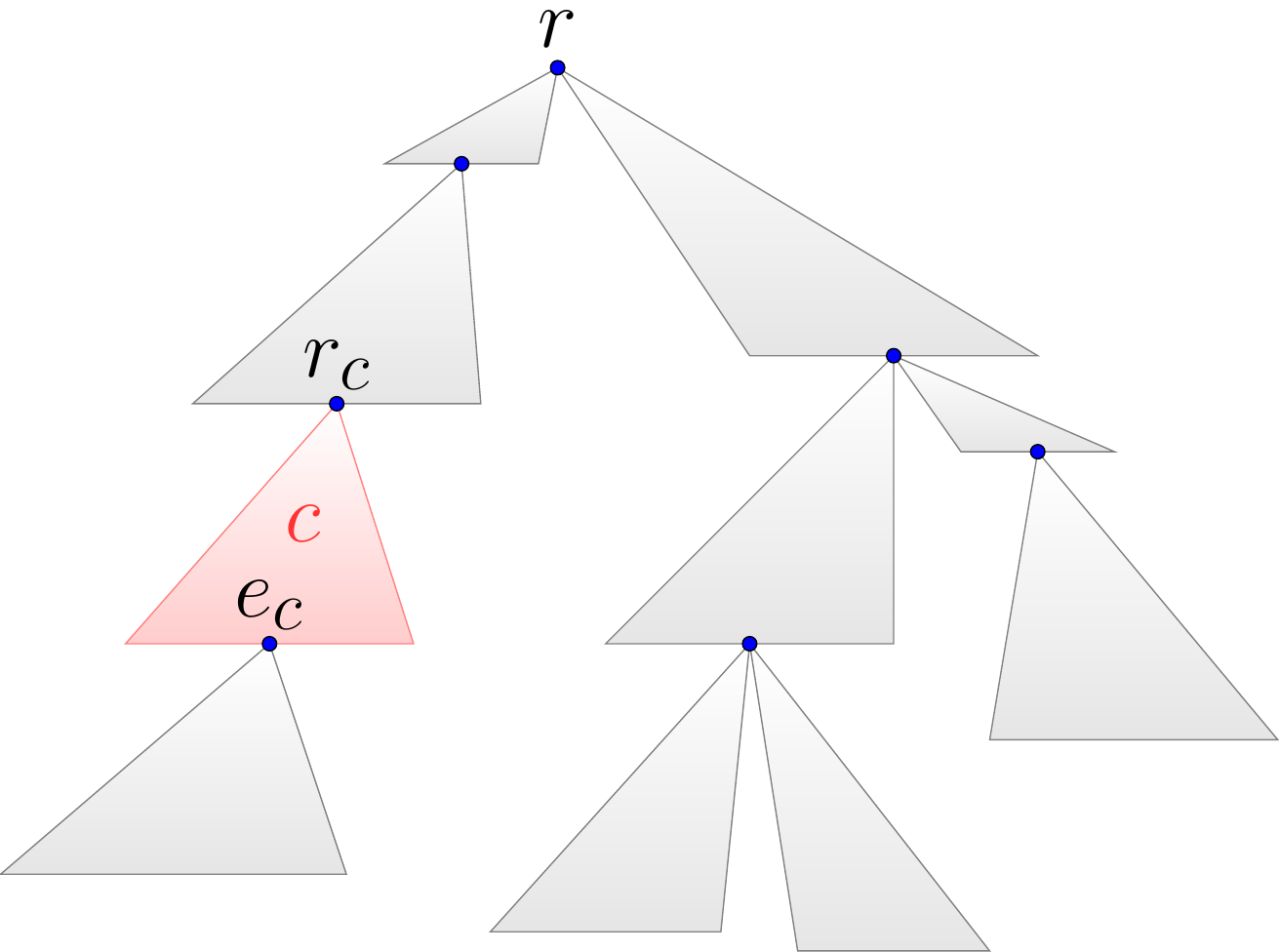}
    \caption{Component decomposition. Extracted from \cite{MZ22a}.}
    \label{fig:component}
\end{figure}

The component decomposition was previously given in \cite{MZ22a}, inspired by Becker and Paul~\cite{becker2019framework}.
Algorithm~\ref{alg:components} is a small adaptation from \cite{MZ22a}, and is given in \cref{sec:alg-component}.
The proof of \cref{lem:comp_dec} is almost identical to that in \cite{MZ22a}, hence omitted.

\begin{definition}[subtours and their categories, \cite{MZ22a}]
    Let $c\in \mathcal{C}$ be any component. A \emph{subtour} in  component $c$ is a walk inside $c$ that starts and ends at $r_c$ and visits at least one terminal.
    For any subtour $s$ in component $c$, we say that the \emph{category} of $s$ is \emph{passing} if $c$ is an internal component and the exit vertex $e_c$ belongs to $s$, and \emph{ending} otherwise.
\end{definition}

\subsection{Solving Instances  with a Bounded Number of Tours}
For an instance of the tree DVRP admitting a solution of a bounded number of tours, an optimal solution can be computed in polynomial time using a simple dynamic program (\cref{alg:local}), in \cref{lem:exact-algo}.

\begin{lemma}
\label{lem:exact-algo}
Let $\Gamma\geq 1$ be a fixed integer.
There is an algorithm \textsc{Solve}$^{(\Gamma)}$ (\cref{alg:local}) that, given an instance $(\tilde T, \tilde U, \tilde D)$ of the DVRP on trees, computes \[\min \{ |S|: S \text{ is a feasible set of tours and }|S|\leq \Gamma \}.\] Thus \textsc{Solve}$^{(\Gamma)}$ returns $+\infty$ if the instance does not admit any solution of at most $\Gamma$ tours.
The running time of \textsc{Solve}$^{(\Gamma)}$ is
%\md{$\mathcal{O}\left( \tilde n\cdot \Gamma \cdot \tilde D^{2\Gamma} \right)$},
$O(\tilde n\cdot (2\Gamma)^{\Gamma}\cdot  \tilde D^{3\Gamma})$,
where $\tilde n$ is the number of terminals in $\tilde T$.
\end{lemma}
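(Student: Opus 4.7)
The plan is to solve the instance by a bottom-up dynamic program on the binary tree $\tilde T$; recall that by the preprocessing each internal vertex has exactly two children. At each vertex $v$, I maintain a table $f_v$ indexed by multisets $M$ of at most $\Gamma$ integer ``partial lengths'' in $\{0,1,\dots,\tilde D\}$. The intended semantics is that $M$ describes the currently open tours serving terminals in the subtree rooted at $v$: each element $L\in M$ is the total length of that tour \emph{if it were closed at $v$}, i.e., the length already spent inside the subtree plus the round trip $2\cdot\dist(\tilde r,v)$. The value $f_v(M)$ is the minimum number of tours already fully decided below $v$ so that, combined with some extension of the $|M|$ open tours to $\tilde r$, all terminals in the subtree of $v$ are covered; set $f_v(M)=+\infty$ if no such completion exists.

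A useful observation is that when an open tour is ``lifted'' from $v$ to its parent $p$ along an edge of weight $w$, its stored length is unchanged, since the sub-walk inside the subtree grows by $2w$ while the round trip $2\cdot\dist(\tilde r,\cdot)$ shrinks by $2w$; this makes the edge transition trivial. At a leaf $v$ (a terminal), the valid states are the non-empty multisets whose entries all equal $2\cdot\dist(\tilde r,v)$, each with $f_v=0$; if $2\cdot\dist(\tilde r,v)>\tilde D$ the leaf is unreachable and the answer is $+\infty$. At an internal vertex $v$ with children $v_1,v_2$, I combine the two tables by enumerating pairs $(M_1,M_2)$ together with a partial matching between the tours of $M_1$ and those of $M_2$: each matched pair $(L_1,L_2)$ is fused into a single tour of length $L_1+L_2-2\cdot\dist(\tilde r,v)$ (rejected if this exceeds $\tilde D$), while unmatched tours are kept as-is; the resulting multiset $M$ at $v$ satisfies $f_v(M)\leq f_{v_1}(M_1)+f_{v_2}(M_2)$. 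Additionally, any tour in a current state at $v$ may be ``closed'', which removes it from $M$ and adds $1$ to $f_v$.

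After the DP reaches the root $\tilde r$, I return $\min_M\bigl(f_{\tilde r}(M)+|M|\bigr)$ if this value is $\leq \Gamma$ and $+\infty$ otherwise. Correctness is immediate from the semantics: any feasible solution with $k\leq\Gamma$ tours induces at every vertex a multiset of open tours of size $\leq\Gamma$ that the DP visits, and conversely every DP branch describes a valid family of tours respecting the $\tilde D$ constraint.

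For the running time, the number of states at each vertex is at most $(\tilde D+1)^{\Gamma}=O(\tilde D^\Gamma)$. At an internal vertex, the merge enumerates $O(\tilde D^{2\Gamma})$ pairs of input states and at most $(2\Gamma)^{\Gamma}$ matchings per pair, each writing into one output state; combining this over the $\tilde n$ vertices and adding the factor $O(\tilde D^\Gamma)$ for enumerating output states gives the stated bound $O(\tilde n\cdot (2\Gamma)^{\Gamma}\cdot\tilde D^{3\Gamma})$. The main point that requires care, and the only non-routine choice, is the length convention: storing the would-be total length (rather than the length used inside the current subtree) is what keeps both the lift-to-parent step and the fusion rule local, so that the DP needs no auxiliary information about the location of $v$ in $\tilde T$.
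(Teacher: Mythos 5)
Your proposal is correct and is essentially the paper's proof: the same bottom-up dynamic program over states consisting of at most $\Gamma$ subtour lengths in $[0,\tilde D]$, merged at each internal vertex by pairing up (or leaving unpaired) subtours from the two children, with the same $O(\tilde n\cdot(2\Gamma)^{\Gamma}\cdot\tilde D^{3\Gamma})$ accounting. The only differences are cosmetic re-parameterizations --- you normalize each stored length by adding the round trip $2\cdot\dist(\tilde r,v)$ so that the edge transition is the identity, and you allow closing tours early via a counter $f_v(M)$, whereas the paper stores raw within-subtree lengths and carries all subtours to the root --- and these are equivalent under the bijection $L=s+2\cdot\dist(\tilde r,v)$.
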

Algorithm~\ref{alg:local} and the proof of \cref{lem:exact-algo} are in \cref{sec:proof-exact-algo}.

%----------------------------------------------

\section{Reduced Lengths}
\label{sec:reduced}
In this section, we introduce a novel concept of \emph{reduced lengths} (\cref{def:reduced-weight}).

\begin{definition}[reduced lengths, see \cref{fig:reduced_length}]
\label{def:reduced-weight}
Let $c \in \mathcal{C}$ be any component.
For any subtour~$s$ in component $c$, we define the \emph{reduced length} $\bar{\ell}(s)$ of subtour $s$ as follows:
\begin{itemize}
\item If $s$ is an \emph{ending subtour},
\[\bar{\ell}(s):=\frac{\length(s)}{D-2\cdot\dist(r,r_c)};\]
\item If $s$ is a \emph{passing subtour},
\[\bar{\ell}(s):=\frac{\length(s)-2\cdot\dist(r_c,e_c)}{D-2\cdot\dist(r,e_c)}.\]
\end{itemize}
\end{definition}

\begin{figure}[h]
\begin{subfigure}[t]{0.5\textwidth}
\centering
\includegraphics[scale=0.5]{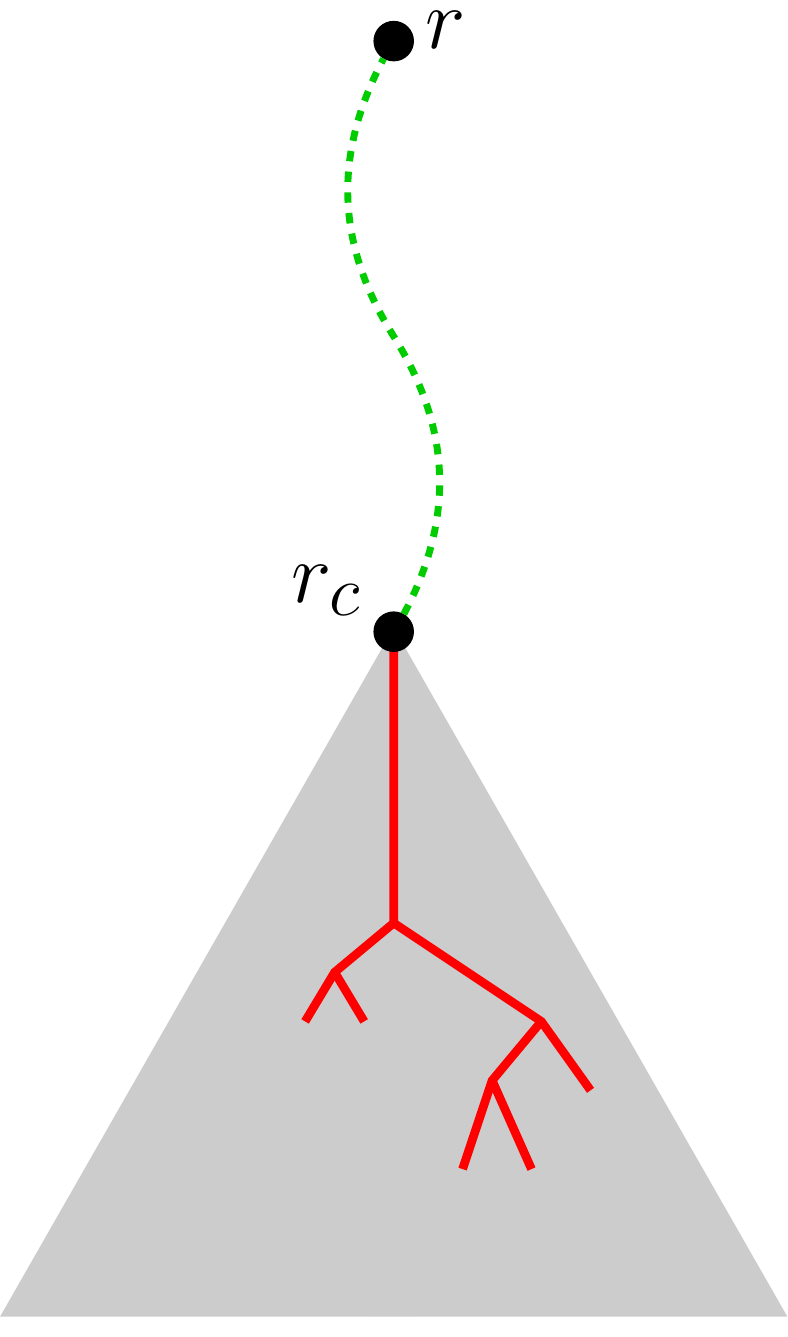}
\caption{Ending subtour $s$}
\label{fig:ending}
\end{subfigure}
\begin{subfigure}[t]{0.5\textwidth}
\centering
\includegraphics[scale=0.5]{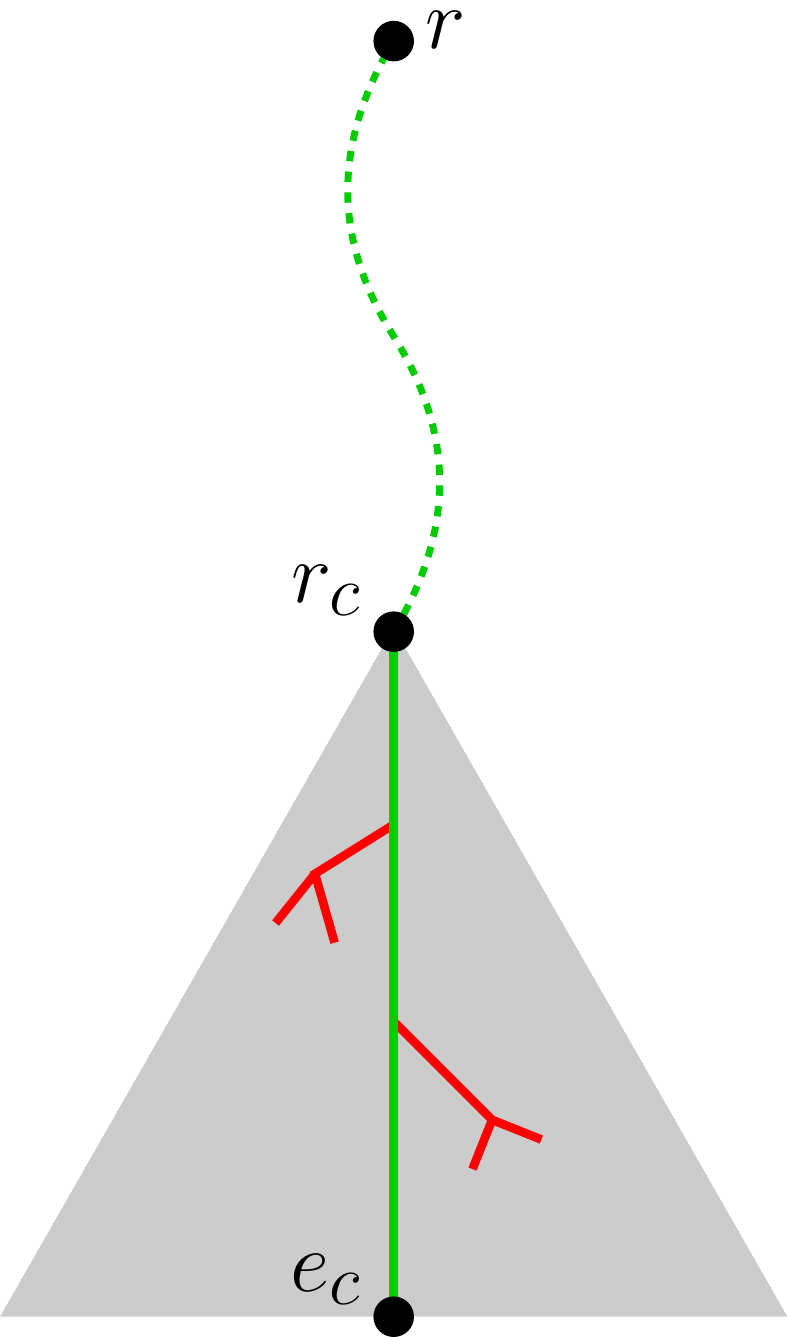}
\caption{Passing subtour $s$}
\label{fig:passing}
\end{subfigure}
\caption{Illustration for the new notion of the \emph{reduced length} of a subtour $s$ in a component $c$.
The component $c$ is represented by the gray triangle.
There are two cases depending on whether $s$ is an ending subtour (\cref{fig:ending}) or a passing subtour (\cref{fig:passing}).
In both cases, the subtour $s$ is represented by the solid segments.
The $r$-to-$r_c$ connection (in both directions) is represented by the dashed curve.
The reduced length $\bar{\ell}(s)$ of the subtour $s$ is defined by $\displaystyle\frac{\length(\text{red})}{D-\length(\text{green})}$.
See \cref{def:reduced-weight}.}
\label{fig:reduced_length}
\end{figure}

We distinguish the two categories of the subtours in the definition of reduced lengths (\cref{def:reduced-weight}) so that the properties in \cref{lemma:single-tour,thm:comp_sub} are satisfied.

%The distinction between ending and passing subtours is essential for the rest of this paper. Indeed, it is needed to prove the following lemma which draws a parallel between subtours in the DVRP problem and items in the bin-packing problem.

\begin{lemma}
\label{thm:comp_sub}
Let $t$ be a tour.
Let $c_1,\dots,c_k$ be the components containing terminals visited by $t$.
For each $i\in [1,k]$, let $s_i$ denote the subtour of $t$ in $c_i$.
%and  $\bar{\ell}_1, ..., \bar{\ell}_l$ be their reduced lengths.
Then
$\sum_{i=1}^k \bar{\ell}(s_i)\leq 1.$
\end{lemma}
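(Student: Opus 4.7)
The plan is to prove a scaled strengthening of the statement by induction on the number of components in the sub-tree of the decomposition rooted at a generic component $c$. Specifically, for every component $c$ and every closed walk $\tilde{t}$ starting and ending at $r_c$ that stays inside the union of $c$ with its descendant components, I would show
\[
\sum_{c' \in \mathcal{V}(\tilde{t})} \bar{\ell}(s_{c'}) \;\leq\; \frac{\length(\tilde{t})}{D - 2R_c},
\]
where $R_c := \dist(r, r_c)$ and $\mathcal{V}(\tilde{t})$ denotes the set of components in which $\tilde{t}$ visits at least one terminal. Applying this to the root component of the decomposition (so that $R_c = 0$) with $\tilde{t} = t$, together with the feasibility constraint $\length(t) \leq D$, immediately yields the desired inequality $\sum_i \bar{\ell}(s_i) \leq 1$.

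The base case is when $c$ is a leaf component: $\tilde{t}$ is then either empty or an ending subtour $s_c$ in $c$, and the bound follows directly from the definition of $\bar{\ell}$. For the inductive step, let $c'_1, \ldots, c'_m$ be the children of $c$ in the component tree, all sharing the root $r_{c'_i} = e_c$, so that $R_{c'_i} = R_c + P_c =: Q_c$ with $P_c := \dist(r_c, e_c)$. When $\tilde{t}$ visits some descendant terminal, the assumption that each edge is traversed exactly twice forces $\tilde{t}$ to use the $r_c$-to-$e_c$ path once in each direction, and the part of $\tilde{t}$ outside $c$ decomposes into closed walks $\tilde{t}_i$ at $e_c$ in the children's sub-trees; the inductive hypothesis applied to each $\tilde{t}_i$ yields $\sum_{c''} \bar{\ell}(s_{c''}) \leq \length(\tilde{t}_i)/(D - 2Q_c)$. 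If $c \notin \mathcal{V}(\tilde{t})$, then the edges of $\tilde{t}$ inside $c$ are exactly that $r_c$-to-$e_c$ path (any side branch would have reached a terminal leaf), so $\length(\tilde{t}) = 2P_c + \sum_i \length(\tilde{t}_i)$, and summing over $i$ bounds the total by $(\length(\tilde{t}) - 2P_c)/(D - 2Q_c)$. If instead $c \in \mathcal{V}(\tilde{t})$ with a passing subtour $s_c$, then $\length(\tilde{t}) = L_c + \sum_i \length(\tilde{t}_i)$, and the contribution $\bar{\ell}(s_c) = (L_c - 2P_c)/(D - 2Q_c)$ combines with the children's bounds to give exactly the same expression $(\length(\tilde{t}) - 2P_c)/(D - 2Q_c)$. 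The ending-subtour case is handled directly since it forces $\mathcal{V}(\tilde{t}) = \{c\}$.

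The induction then closes via the elementary inequality
\[
\frac{B - 2P_c}{D - 2Q_c} \;\leq\; \frac{B}{D - 2R_c} \qquad \text{whenever } B \leq D - 2R_c,
\]
which cross-multiplies to $2P_c\bigl(B - (D - 2R_c)\bigr) \leq 0$. The main subtlety, and the reason for strengthening the statement to a scaled form before induction, is the case where $c$ has several children: a per-child bound of the form ``$\sum \bar{\ell}(s_{c''}) \leq 1$'' would sum to more than $1$ across children, whereas the fractional budget $\length(\tilde{t}_i)/(D - 2R_{c'_i})$ composes additively and stays controlled by the parent's budget $\length(\tilde{t})/(D - 2R_c)$.
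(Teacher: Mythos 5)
Your proposal is correct in substance, but it proves the lemma by a genuinely different route than the paper. The paper's proof is direct and global, with no induction: it selects a component $c^*$ among $c_1,\dots,c_k$ maximizing $\dist(r,r_{c^*})$, observes that the $r$-to-$r_{c^*}$ path is edge-disjoint from the sets $E_1,\dots,E_k$ (where $E_i$ is the edge set of $s_i$, with the $r_{c_i}$-to-$e_{c_i}$ path removed when $s_i$ is passing), deduces $2\cdot\dist(r,r_{c^*}) + \sum_i 2\cdot w(E_i) \le \length(t) \le D$, and then bounds each term individually: $\bar{\ell}(s_i) \le 2\cdot w(E_i)/(D-2\cdot\dist(r,r_{c^*}))$, because the denominator in \cref{def:reduced-weight} (namely $D-2\cdot\dist(r,r_{c_i})$ or $D-2\cdot\dist(r,e_{c_i})$) is at least $D-2\cdot\dist(r,r_{c^*})$ by maximality of $c^*$. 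Your structural induction with the scaled invariant $\sum_{c'}\bar{\ell}(s_{c'}) \le \length(\tilde t)/(D-2R_c)$ is more local and compositional: it explains \emph{why} the normalizations in \cref{def:reduced-weight} are the right ones (fractional budgets compose additively across the children attached at $e_c$), at the cost of a longer case analysis, whereas the paper's deepest-root trick collapses everything into a single inequality.

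Three points need tightening before your argument is fully rigorous. First, your scaled claim is false as stated for long walks: in the passing case the left-hand side grows with slope $1/(D-2Q_c) > 1/(D-2R_c)$, so the hypothesis $\length(\tilde t)\le D-2R_c$ must be part of the induction statement, not only invoked in the final cross-multiplication. It does propagate, since $\length(\tilde t_i)\le \length(\tilde t)-2P_c \le (D-2R_c)-2P_c = D-2R_{c'_i}$, but this verification should appear explicitly. Second, there need not be a unique ``root component'': \cref{alg:components} can produce several components rooted at $r$, so $t$ need not stay within a single component's subtree. The fix is the same decomposition you already use at $e_c$: split $t$ at $r$ into closed excursions, one per component rooted at $r$, apply the scaled claim to each with $R=0$, and sum, using $\sum_j \length(\tilde t_j)\le \length(t) \le D$. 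Third, in the case $c\notin\mathcal{V}(\tilde t)$ your equality $\length(\tilde t) = 2P_c + \sum_i \length(\tilde t_i)$ should be ``$\ge$'': the standing assumption that each tour edge is traversed exactly twice does not by itself exclude terminal-free excursions inside $c$ that turn back before reaching a leaf; fortunately the inequality points the right way, so nothing breaks. With these repairs, your induction is a complete and valid alternative proof of \cref{thm:comp_sub}.
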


\begin{proof}
Let $c^*\in\{c_1,\dots,c_k\}$ be a component such that $\dist(r,r_{c^*})$ is maximized.
Let $p$ be the path from $r$ to $r_{c^*}$.
For each $i\in[1,k]$, let $E_i\subseteq E$ denote a subset of edges defined as follows: if $s_i$ is an ending subtour, then $E_i$ consists of the edges in $s_i$; and if $s_i$ is a passing subtour, then $E_i$ consists of the edges in $s_i$ that do not belong to the $r_{c_i}$-to-$e_{c_i}$ path.
By construction, the edges in $p$, $E_1,\dots,E_k$ are disjoint.
Let $w(E_i)$ denote $\sum_{e\in E_i} w(e)$.
Since the length of $t$ is at most $D$, we have:
\[2\cdot \dist(r,r_{c^*}) + \sum_{i=1}^k 2\cdot w(E_i)  \leq D,\]
and equivalently,
\begin{equation}
\label{eqn:at-most-1}
\sum_{i=1}^k
\frac{2\cdot w(E_i)}{D - 2\cdot \dist(r,r_{c^*})} \leq 1.
\end{equation}
For each $i\in[1,k]$, by the definition of $E_i$, we have

\begin{equation}
\label{eqn:w_Ei}
    2\cdot w(E_i)=
\begin{cases}
\length(s_i), & \text{if $s_i$ is an ending subtour}\\
\length(s_i)-2\cdot \dist(r_{c_i},e_{c_i}), & \text{if $s_i$ is a passing subtour.}
\end{cases}
\end{equation}
By the choice of $c^*$, we have
\begin{equation}
\label{eqn:dist_r_rc*}
\dist(r,r_{c^*})\geq \begin{cases}
\dist(r,r_{c_i}), & \text{if $s_i$ is an ending subtour}\\
\dist(r,e_{c_i}), & \text{if $s_i$ is a passing subtour.}
\end{cases}
\end{equation}
From \cref{eqn:w_Ei,eqn:dist_r_rc*} and \cref{def:reduced-weight}, for all $i\in[1,k]$, we have
\[\bar{\ell}(s_i)\leq
\frac{2\cdot w(E_i)}{D-2\cdot \dist(r,r_{c^*})}.\]
Combining with \cref{eqn:at-most-1}, the claim follows.
\end{proof}

\begin{lemma}
\label{lemma:single-tour}
Let $c$ be any component.
Let $s_1,\dots,s_m$, for some $m\geq 1$, be any subtours in $c$ that are of the same category and such that $\sum_{i=1}^m\bar{\ell}(s_i) \leq 1.$
Then all terminals from all subtours $s_1,\dots,s_m$ can be visited by a tour of length at most $D$.
\end{lemma}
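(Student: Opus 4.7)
The proof splits into two cases according to the common category of $s_1,\dots,s_m$.

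\textbf{Ending case.} Since each $s_i$ is an ending subtour in $c$, by \cref{def:reduced-weight} the hypothesis $\sum_i \bar\ell(s_i)\le 1$ rewrites as $\sum_i \length(s_i)\le D-2\dist(r,r_c)$. The plan is to build a single tour that goes from $r$ to $r_c$, then walks each $s_i$ in turn (each being a closed walk at $r_c$), then returns from $r_c$ to $r$. This tour visits every terminal visited by any $s_i$, and its total length is
\[2\dist(r,r_c)+\sum_{i=1}^m \length(s_i)\le 2\dist(r,r_c)+\bigl(D-2\dist(r,r_c)\bigr)=D.\]

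\textbf{Passing case.} This is the step that actually motivates the more delicate definition of reduced length for passing subtours, and I expect it to be the main obstacle. Now each $s_i$ traverses the unique $r_c$-to-$e_c$ path $P$ in $c$. By \cref{def:reduced-weight} the hypothesis becomes
\[\sum_{i=1}^m \bigl(\length(s_i)-2\dist(r_c,e_c)\bigr)\le D-2\dist(r,e_c).\]
The plan is to form a single tour by taking the union of the edges used by $s_1,\dots,s_m$ inside $c$, doubling each such edge to obtain an Eulerian multigraph, taking an Eulerian circuit starting at $r_c$, and finally prepending and appending the $r$-to-$r_c$ path. Since the edges of $P$ are already used (twice) by every single $s_i$, taking the union causes no overcounting on $P$, so the edges of the combined tour outside of the $r$-to-$r_c$ path contribute at most
\[2\dist(r_c,e_c)+\sum_{i=1}^m \bigl(\length(s_i)-2\dist(r_c,e_c)\bigr),\]
where the first term accounts for $P$ once (doubled to $2\dist(r_c,e_c)$), and the sum bounds the total weight of the non-$P$ edges across all $s_i$. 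Using $\dist(r,r_c)+\dist(r_c,e_c)=\dist(r,e_c)$ (which holds because $r_c$ lies on the $r$-to-$e_c$ path, as $r_c$ is the vertex of $c$ closest to $r$), the total length of the combined tour is at most
\[2\dist(r,r_c)+2\dist(r_c,e_c)+\sum_{i=1}^m\bigl(\length(s_i)-2\dist(r_c,e_c)\bigr)\le 2\dist(r,e_c)+\bigl(D-2\dist(r,e_c)\bigr)=D.\]

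\textbf{Why it works.} In both cases the construction is trivial, and the whole content is that reduced lengths are defined to match exactly what does \emph{not} get shared when we merge subtours: for ending subtours nothing is shared, so the reduced length is simply $\length(s)$ normalized by the available budget $D-2\dist(r,r_c)$; for passing subtours the $r_c$-to-$e_c$ path is shared among all subtours of the same component, so the reduced length is the ``excursion'' length $\length(s)-2\dist(r_c,e_c)$ normalized by $D-2\dist(r,e_c)$. The requirement that the $s_i$ share a category is essential: an ending and a passing subtour in the same component could not be merged in this way without paying twice for $P$.
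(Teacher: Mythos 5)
Your proof is correct and follows essentially the same route as the paper's: in both cases you merge the subtours into one tour that pays the $r$-to-$r_c$ (resp.\ $r$-to-$e_c$) connection once and charges each subtour only its unshared portion, which is exactly what the reduced length normalizes. The Eulerian-circuit phrasing in the passing case is a cosmetic variation; the paper simply concatenates the subtours with the $r_c$-to-$e_c$ segment removed, yielding the identical length bound.
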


\begin{proof}
    We construct a tour $t$ visiting all terminals from all subtours $s_1,\dots, s_m$, and we show that its length is at most $D$.
    Since  $s_1,\dots,s_m$ are of the same category, there are two cases depending on their category.% subtour and bound its total length to prove that its length is at most $D$.

    \begin{itemize}
        \item If $s_1,\dots,s_m$ are ending subtours, then $t$ consists of the $r$-to-$r_c$ connection (in both directions) and of the subtour $s_i$ for each $i\in[1,m]$.
        Therefore,
        \begin{align*}
            \length(t) & \leq 2 \cdot \dist(r, r_c) + \sum_{i=1}^m \length(s_i) \\
            & = 2\cdot\dist(r, r_c) + (D - 2 \cdot \dist(r, r_c)) \sum_{i=1}^m \bar{\ell}(s_i) && \text{(by \cref{def:reduced-weight})} \\
            & \leq D && \text{(since $\sum_{i=1}^{m}\bar{\ell}(s_i)\leq 1$).}
        \end{align*}

        \item If $s_1,\dots,s_m$ are passing subtours, then $t$ consists of the $r$-to-$e_c$ connection (in both directions) and of the subtour $s_i$ without the  $r_c$-to-$e_c$ connection (in both directions) for each $i\in [1,m]$. Therefore,
        \begin{align*}
            \length(t) & \leq 2 \cdot \dist(r, e_c) + \sum_{i=1}^m \left( \length(s_i) - 2 \cdot \dist(r_c, e_c) \right) \\
            & = 2\cdot \dist(r, e_c) + (D - 2 \cdot \dist(r, e_c)) \sum_{i=1}^m \bar{\ell}(s_i) && \text{(by \cref{def:reduced-weight})} \\
            & \leq D && \text{(since $\sum_{i=1}^{m}\bar{\ell}(s_i)\leq 1$).}
        \end{align*}
    \end{itemize}
The claim follows in both cases.
\end{proof}

\begin{lemma}
\label{lem:number-components}
Assuming $\Gamma\geq 20$, the number of components in $\mathcal{C}$ is at most $(15 /\Gamma)\cdot \opt$.
\end{lemma}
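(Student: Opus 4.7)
The plan is to reduce the claim to a bound on the number $B$ of big components in $\mathcal{C}$, and then to derive $B \leq (5/\Gamma)\opt$ by a charging argument based on reduced lengths. For the first reduction, property~4 of \cref{lem:comp_dec} gives $|\mathcal{C}|\leq 3B$ whenever $|\mathcal{C}|>1$; in the degenerate case $|\mathcal{C}|=1$ the unique component contains every descendant of the root, so it is a leaf and hence big by property~3, giving $B=1$ and $|\mathcal{C}|=1\leq 3B$. Thus it suffices to prove $B\leq (5/\Gamma)\opt$.

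The heart of the argument is a lower bound on the total reduced length of subtours inside a single big component. Fix a big component $c\in\mathcal{C}$, let $S_c$ denote the multiset of restrictions of the tours of $\OPT$ to $c$, and set $X_c := \sum_{s\in S_c}\bar{\ell}(s)$. I aim to show $X_c\geq \Gamma/5$. The idea is to partition $S_c$ into its ending and passing subtours and to run Next Fit bin packing separately on each class, using $\bar{\ell}(s)$ as item size and $1$ as bin capacity. Each resulting bin contains subtours of a single category with total reduced length at most $1$, so by \cref{lemma:single-tour} it can be realized as one feasible tour visiting exactly the terminals of the subtours it contains; since the subtours in $S_c$ collectively cover all terminals of $c$, so does the family of tours obtained from all bins. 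Standard Next Fit yields at most $2X_c+2$ bins in total, but $c$ being big forces this number to be at least $\Gamma/2$, which rearranges to $X_c\geq \Gamma/4 - 1$. The hypothesis $\Gamma\geq 20$ is used precisely at this point to absorb the $-1$ and conclude $X_c\geq \Gamma/5$. Once $X_c\geq \Gamma/5$ is in place, summing over big components and swapping the order of summation gives
\[
\sum_{c\text{ big}} X_c \;=\; \sum_{t\in\OPT}\;\sum_{\substack{c\text{ big}\\ t\text{ visits }c}} \bar{\ell}(s_{t,c}) \;\leq\; \opt,
\]
where the inequality is \cref{thm:comp_sub} applied tour by tour. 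Combined with $X_c\geq \Gamma/5$, this yields $B\leq 5\opt/\Gamma$ and therefore $|\mathcal{C}|\leq 3B\leq (15/\Gamma)\opt$.

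The main obstacle I anticipate is the lower bound $X_c\geq \Gamma/5$: it is the only place where the defining property of a big component meets the bin-packing-like interpretation of reduced lengths supplied by \cref{lemma:single-tour}, and also the only place where the hypothesis $\Gamma\geq 20$ is actually used. The other two ingredients --- the reduction $|\mathcal{C}|\leq 3B$ from property~4 of \cref{lem:comp_dec} and the final charging against $\opt$ via \cref{thm:comp_sub} --- are essentially bookkeeping.
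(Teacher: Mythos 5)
Your proposal is correct and follows essentially the same route as the paper: the paper's step of partitioning the subtours of $\OPT$ within a big component into same-category parts, all but at most two of which have reduced length in $(1/2,1]$, is exactly your Next Fit argument run separately on the ending and passing subtours, yielding the same bound $2X_c+2\geq \Gamma/2$ and hence $X_c\geq \Gamma/5$ for $\Gamma\geq 20$, followed by the same charging against $\opt$ via \cref{thm:comp_sub} and the factor $3$ from Property~4 of \cref{lem:comp_dec}. Your explicit handling of the $|\mathcal{C}|=1$ case (where Property~4 is vacuous but the unique component is a leaf, hence big) is a small point the paper glosses over.
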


\begin{proof}
Let $\mathcal{C}_b\subseteq \mathcal{C}$ denote the set of big components in $\mathcal{C}$.
Consider a big component $c\in \mathcal{C}_b$.
Let $s_{c,1}, ..., s_{c,m_c}$ denote all subtours in $c$ in the global optimal solution, for some $m_c\in \mathbb{N}$.
Let $\bar{\ell}_{c,1},\dots, \bar{\ell}_{c,m_c}$ denote their reduced lengths.
Those subtours can be  viewed as a feasible solution to the local problem for the terminals in $c$ only.
We partition those subtours into parts, such that the subtours in the same part are of the same category, and in addition, for each part except possibly two parts, the reduced lengths of the subtours in that part sum to a value in $(1/2,1]$. For each part, we replace the subtours by a new subtour visiting the terminals covered by all the subtours in that part. This creates a new feasible local solution for the terminals of $c$, and its number of subtours is at most $2\sum_{i=1}^{m_c} \bar{\ell}_{c,i} +2$.
By definition of big components, covering the terminals of $c$ requires at least  $ \Gamma/2$ tours. Thus:
 $$
 2\sum_{i=1}^{m_c} \bar{\ell}_{c,i} +2 \geq \Gamma/2.
 $$
Therefore,
 $$
 \sum_{i=1}^{m_c} \bar{\ell}_{c,i}\geq \Gamma/4 -1 \geq \Gamma/5,
 $$
since $\Gamma\geq 20$. Hence
$$
\sum_{c \in \mathcal{C}} \sum_{i=1}^{m_c} \bar{\ell}_{c,i}
\geq |\mathcal{C}_b| \cdot \Gamma /5 \geq
|\mathcal{C}| \cdot \Gamma / 15,
$$
where the last inequality follows from Property~4 of \cref{lem:comp_dec}.

On the other hand, consider each tour $t$ in $\OPT$. Let $k_t$ denote the number of components containing terminals visited by tour $t$.
For each $i \in [1,k_t]$, let $\bar{\ell}'_{t,i}$ denote the reduced length of the $i^{\rm th}$ subtour of $t$.
By \cref{thm:comp_sub} we have:
\begin{align*}
\sum_{t\in \OPT} \sum_{i=1}^{k_t} \bar{\ell}'_{t, i} \leq \sum_{t\in \OPT} 1 = \opt.
\end{align*}
By re-ordering this sum we get
$$\sum_{t\in \OPT} \sum_{i=1}^{k_t} \bar{\ell}'_{t, i}  = \sum_{c \in \mathcal{C}} \sum_{i=1}^{m_c} \bar{\ell}_{c,i}.$$
Therefore, $\opt \geq |\mathcal{C}| \cdot \Gamma / 15$. The claim follows.
\end{proof}

%\section{Analysis of the Algorithm}

%In this section, we show that \cref{alg:main} is a $1.691...$-approximation.
%\footnote{The proof of \cref{lem:comp_dec} already shows that it is a $2$-approximation.}

\section{Proof of \cref{thm:alpha-approx}}
\label{sec:proof-alpha-approx}

To prove \cref{thm:alpha-approx}, we  use a reduction from the DVRP on trees to the \emph{bounded space online bin-packing}.

\subsection{Bounded Space Online Bin Packing}
\label{sec:lee-lee}
%In the analysis of our algorithm, we  consider the \emph{bounded space online bin-packing} (\cref{def:online-packing}).
\begin{definition}
\label{def:online-packing}
In the \emph{bounded space online  bin-packing} problem, we are given a positive integer  $M$ and an online sequence of item sizes $(a_1, a_2, ..., a_n) \in [0,1]^n$, and we want to pack those items into  bins of size $1$.
At any time, the following operations are allowed: (1) opening a bin; (2) closing an bin; (3) assigning the current item to some \emph{open} bin.
We require that at any time there are at most $M$ open bins.
%there are at most $M$ {\em open} bins, i.e.,\ bins that are allowed to receive additional items in the future;
The goal is to minimize the total number of bins used.
\end{definition}

%The idea behind considering this variant of bin-packing is that by only having only a few bins open at any time, we can if necessary duplicate them at a minimal cost compared to the overall number of bins needed.

 Let $\opt_{\rm BP}$ denote the number of bins in an optimal solution to the bin packing in the \emph{offline} setting.
 The following theorem due to Lee and Lee \cite{leeleeonlinepacking} is a direct corollary of Theorem~2 from \cite{leeleeonlinepacking} and Eq.~(3.6) from \cite{leeleeonlinepacking}.

 \begin{theorem}[\cite{leeleeonlinepacking}]
Let $M$ be any positive integer.
Let $k\in\mathbb{N}$ be such that $u_k < M \leq u_{k+1}$.
There exists a solution to the bounded space online bin-packing
 %called HARMONIC$_M$,
in which the total number of bins used is at most
\[ \left(\sum_{i = 1}^k \frac{1}{u_i} + \frac{M}{(M-1)u_{k +1}}\right) \cdot \opt_{\rm BP}
+(M-1).\]
\label{theorem:leelee}
\end{theorem}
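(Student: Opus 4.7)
My plan is to prove this by analyzing the Harmonic-$M$ algorithm of Lee and Lee via the standard weight-function technique for online bin packing.

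First I would describe the algorithm. Partition possible item sizes into $M$ classes: for $j \in \{1,\dots,M-1\}$, class $j$ consists of items of size in $(1/(j+1), 1/j]$, and class $M$ consists of the remaining small items of size in $(0, 1/M]$. The algorithm keeps exactly one open bin per class; each arriving item is placed into its class's open bin, and whenever no further item of that class could fit, the bin is closed and replaced by a fresh empty bin of the same class. For $j \leq M-1$ this closing happens after $j$ items have accumulated, and for class $M$ whenever the free space drops below $1/M$. At most $M$ bins are ever open simultaneously, which meets the bounded-space requirement; at most $M-1$ bins remain only partially filled at termination, which will account for the additive $M-1$ term.

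Next I would introduce the weight function $w$ defined by $w(a) := 1/j$ for an item $a$ of class $j \leq M-1$ and $w(a) := (M/(M-1))\cdot a$ for a small item of class $M$. By construction, every closed bin of class $j \leq M-1$ contains exactly $j$ items, hence carries total weight $1$; every closed bin of class $M$ has items of total size at least $(M-1)/M$, hence carries total weight at least $1$. Consequently, the number of closed bins is at most $\sum_i w(a_i)$, and the algorithm's total bin count is at most $\sum_i w(a_i) + (M-1)$.

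The heart of the argument is to bound, for each bin $B$ of an offline optimum, $\sum_{a \in B} w(a)$ by $R(M) := \sum_{i=1}^k 1/u_i + M/((M-1)\, u_{k+1})$; summing over the $\opt_{\mathrm{BP}}$ bins then yields the theorem. Suppose $B$ contains items of medium classes $j_1 < j_2 < \cdots$ together with small items of total size $\sigma$; then the unit-capacity constraint reads $\sum_\ell 1/(j_\ell+1) + \sigma \leq 1$, and I would maximize $\sum_\ell 1/j_\ell + (M/(M-1))\,\sigma$ under it. Because the weight-per-size ratio $(j+1)/j$ of a class-$j$ item is strictly decreasing in $j$, and since using several items of the same class is weakly dominated by using items of smaller-index classes, a greedy analysis shows the maximizer is attained at $j_\ell = u_\ell$ for $\ell = 1, \dots, k$: after placing one such item the residual capacity drops to $1/u_{\ell+1}$, which forces the next admissible class index to be $u_{\ell+1}$. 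Since $M \leq u_{k+1}$, this Sylvester-type recursion can continue through $\ell = k$, after which the residual $1/u_{k+1}$ is best filled by small items contributing weight $M/((M-1)u_{k+1})$. The main delicate step, and the one I expect will require the most care, is verifying this greedy-optimality at the medium-to-small transition, checking that swapping any medium item for small items of equal size (or vice versa) cannot increase the weight; this is exactly where the condition $u_k < M \leq u_{k+1}$ enters.
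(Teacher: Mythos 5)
The paper does not prove this statement at all: it is imported verbatim as a black box, stated as ``a direct corollary of Theorem~2 and Eq.~(3.6)'' of Lee and Lee, so there is no in-paper argument to compare against. What you have written is essentially a reconstruction of the original Lee--Lee proof: the Harmonic-$M$ algorithm with classes $(1/(j+1),1/j]$ for $j\le M-1$ plus a small-item class, the weight function $w=1/j$ on class $j$ and $w(a)=\frac{M}{M-1}a$ on small items, the observation that every closed bin has weight at least $1$ (and that class-$1$ bins always have weight exactly $1$, which is what brings the additive term down to $M-1$ rather than $M$), and the per-bin bound $\sum_{a\in B}w(a)\le\sum_{i=1}^k 1/u_i+\frac{M}{(M-1)u_{k+1}}$ obtained from the Sylvester-type greedy configuration, where the residual capacity after placing one item of each class $u_1,\dots,u_k$ is just under $1/u_{k+1}$ and the hypothesis $u_k<M\le u_{k+1}$ guarantees both that class $u_k$ exists among the medium classes and that no further medium item fits. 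This outline is correct and is the standard route; the one place where it remains a sketch rather than a proof is the extremal step you yourself flag, namely the exchange argument showing that the greedy configuration maximizes the weight of an offline bin (Lee and Lee's Lemma~2). If you intend this as a full proof you would need to carry out that exchange argument; as a justification of a cited result it is faithful and adequate.
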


\subsection{Reduction}
In this subsection, we prove the following \cref{thm:reduction}.

\begin{theorem}
\label{thm:reduction}
\label{thm:alpha-construction}
Assume that the bounded space online bin-packing problem admits a solution using at most $\beta_M \cdot \opt_{\rm BP} + \gamma_M$ bins, where $\beta_M$ and $\gamma_M$ are  parameters depending on the space bound $M$, and $\opt_{\rm BP}$ is defined in \cref{sec:lee-lee}.
Let $\eps >0$.
Then \cref{alg:main} uses at most $(\beta_M + 30 \eps^2\cdot M) \cdot \opt + \gamma_M $ tours for the tree DVRP.
\end{theorem}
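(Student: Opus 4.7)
The plan is to reduce the analysis of \cref{alg:main} to the bounded space online bin packing problem in four steps.

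First, I would build an online sequence of items from $\OPT$. For each tour $t\in \OPT$ and each component $c$ containing a terminal visited by $t$, let $s_{t,c}$ denote the subtour of $t$ in $c$, and treat its reduced length $\bar{\ell}(s_{t,c})\in[0,1]$ as an item. I arrange these items into an online sequence by grouping, so that all items associated to the same pair $(c,\kappa)$ with $\kappa\in\{\text{ending},\text{passing}\}$ appear consecutively; this partitions the sequence into at most $2|\mathcal{C}|$ contiguous \emph{blocks}.

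Next, I would run the assumed bounded space online packing algorithm on this sequence. By \cref{thm:comp_sub}, the items produced by a single tour of $\OPT$ sum to at most $1$ and hence fit in a single bin, so the offline optimum satisfies $\opt_{\rm BP}\leq \opt$. The algorithm therefore uses at most $\beta_M\cdot \opt+\gamma_M$ bins. I would then convert the packing into a feasible DVRP solution: for each bin $b$ and each block $(c,\kappa)$ contributing items to $b$, the items of that type inside $b$ are subtours of $c$ of the same category whose reduced lengths sum to at most $1$, so by \cref{lemma:single-tour} they can be merged into a single tour of length at most $D$ that covers exactly their terminals. Grouping these tours by component gives, for each $c\in\mathcal{C}$, a feasible local solution of cost at most $\sum_\kappa |B_{c,\kappa}|$, where $B_{c,\kappa}$ is the set of bins containing items of block $(c,\kappa)$. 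Since \cref{alg:local} returns the local optimum (\cref{lem:exact-algo}) and this local optimum is finite by Property~3 of \cref{lem:comp_dec}, the algorithm's output satisfies $\sum_c n_c \leq \sum_{(c,\kappa)}|B_{c,\kappa}|=\sum_b p_b$, where $p_b$ counts the number of distinct blocks with items in $b$.

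The hard part will be bounding the ``dirty bin'' overhead $\sum_b (p_b-1)$. My plan is to charge dirty bins to transitions between consecutive blocks in the online sequence: a bin $b$ with $p_b$ blocks spans $p_b-1$ such transitions, and the bounded space invariant guarantees that at most $M$ bins are open at any moment, so at most $M$ distinct bins can span any fixed transition. Since there are at most $2|\mathcal{C}|-1$ transitions, this yields $\sum_b(p_b-1)\leq 2M|\mathcal{C}|$. Combining with $|\mathcal{C}|\leq 15\eps^2\opt$ from \cref{lem:number-components} (valid since $\Gamma=1/\eps^2\geq 20$ for small enough $\eps$), I obtain $\sum_b p_b\leq (\beta_M\cdot\opt+\gamma_M)+30\eps^2 M\cdot\opt$, which matches the desired bound.
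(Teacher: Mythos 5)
Your proposal is correct and follows essentially the same route as the paper's proof: the same online sequence of reduced lengths grouped by (component, category), the same use of \cref{thm:comp_sub} to get $\opt_{\rm BP}\leq\opt$ and of \cref{lemma:single-tour} to turn each homogeneous bin fragment into a feasible tour, and the same charging of the splitting overhead to block transitions via the bounded-space invariant together with \cref{lem:number-components}. Your explicit remark that the local optimum is finite by Property~3 of \cref{lem:comp_dec} is a small point the paper leaves implicit, but otherwise the arguments coincide.
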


Consider an instance of the DVRP on a tree.
Let $m$ denote the number of components, and let $\mathcal C = \{ c_1, c_2, ..., c_m \}$ denote the set of components.
For each component $c_i\in \mathcal{C}$,  let $\bar{\ell}_{i,1}, \bar{\ell}_{i,2}, ..., \bar{\ell}_{i,e_i}$ denote the reduced lengths of the ending subtours in $c_i$ from $\OPT$, and $\bar{\ell}'_{i,1}, \bar{\ell}'_{i,2}, ..., \bar{\ell}'_{i,p_i}$ denote the reduced lengths of the passing subtours in $c_i$ from $\OPT$. We define the following instance of the bounded space online bin-packing:
\begin{align*}
L = (\bar{\ell}_{1,1}, \bar{\ell}_{1,2}, ..., \bar{\ell}_{1,e_1}, \bar{\ell}'_{1,1}, \bar{\ell}'_{1,2}, ..., \bar{\ell}'_{1,p_1}, ..., \bar{\ell}_{m,1}, \bar{\ell}_{m,2}, ..., \bar{\ell}_{m,e_m}, \bar{\ell}'_{m,1}, \bar{\ell}'_{m,2}, ..., \bar{\ell}'_{m,p_m}).
\end{align*}

Let $B_1$ denote a solution to this instance satisfying the assumption of the claim, i.e.,
\begin{equation}
\label{eqn:B1}
|B_1|\leq \beta_M\cdot \opt_{\rm BP} + \gamma_M.
\end{equation}

For each bin $b$ in $B_1$, we partition its contents so that each part contains the reduced lengths of the subtours that come from the same component and are in the same category, and we replace $b$ by a collection of bins, one for each part of the partition containing at least one subtour of $b$. This defines a bin-packing $B_2$.

Next, we define a solution $S$ to the tree DVRP corresponding to $B_2$.
For each bin $b$ of $B_2$, we consider the subtours that correspond to the reduced lengths in $b$, and we create one tour in $S$, which is the minimum tour visiting all terminals covered by those subtours.
Observe that those subtours are in the same component and of the same category and, in addition, their reduced lengths sum to at most $1$.
By \cref{lemma:single-tour}, the created tour is within the distance constraint $D$.
Therefore, $S$ is a feasible solution to the tree DVRP.

By construction, each tour in $S$ visits terminals from only one component.
Therefore, the output of \cref{alg:main} is at most $|S|$, so is at most $|B_2|$.
It remains to analyze $|B_2|$.
\begin{lemma}
\label{lemma:maj-b2}
$|B_2| \leq |B_1| + (30/\Gamma) M\cdot \opt.$
\end{lemma}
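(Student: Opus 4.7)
The plan is to bound the overage $|B_2|-|B_1|$ by counting, for each bin $b\in B_1$, the number of distinct (component, category) pairs whose items land in $b$. Write $k_b$ for that count. Since $B_2$ is obtained from $B_1$ by partitioning each bin according to (component, category), we have $|B_2|=\sum_{b\in B_1} k_b$, and hence
\[|B_2|-|B_1|=\sum_{b\in B_1}(k_b-1).\]
So the task reduces to bounding $\sum_b (k_b-1)$.

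The next step uses the structure of the sequence $L$: items from the same (component, category) are placed consecutively. Enumerate the (component, category) pairs in the order they appear in $L$ as $P_1, P_2, \ldots, P_q$; since each of the $m$ components contributes at most two such pairs (ending and passing), we have $q\leq 2m$, so there are at most $2m-1$ \emph{transitions} $P_j\to P_{j+1}$. For each bin $b$, the quantity $k_b-1$ is exactly the number of transitions that $b$ \emph{crosses}, meaning $b$ receives at least one item before the transition and at least one item after. Thus
\[\sum_{b\in B_1}(k_b-1)=\sum_{j=1}^{q-1}\bigl(\text{number of bins of }B_1\text{ crossing the transition }P_j\to P_{j+1}\bigr).\]

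The key point, which is where the bounded space assumption enters, is that any bin crossing the transition $P_j\to P_{j+1}$ must be open at the moment this transition occurs: it already holds an item from $P_j$ (so it was opened earlier and not yet closed) and will later receive an item from $P_{j+1}$ (so it is still open at that moment). By \cref{def:online-packing}, at most $M$ bins are open at any time, so at most $M$ bins can cross any single transition. This bounds the double sum above by $M(q-1)\leq 2Mm$.

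The final step is to replace $m$ using \cref{lem:number-components}, which gives $m\leq (15/\Gamma)\cdot\opt$ (recall $\Gamma=1/\eps^2\geq 20$ by the assumption that $\eps$ is sufficiently small). Combining,
\[|B_2|-|B_1|\leq 2Mm\leq (30/\Gamma)\,M\cdot\opt,\]
which is exactly the claim. The only subtlety to watch for is the counting in the second step: a bin $b$ touching $k_b$ distinct pairs is traversed in $L$ from its first to its last pair without skipping, so $k_b-1$ really equals the number of transitions it crosses; this is immediate from the fact that within each pair the items are consecutive and a bin cannot be reopened once closed. There is no serious obstacle; the short proof is essentially a bookkeeping argument made possible by the placement order of $L$ and the $M$-bounded-space property.
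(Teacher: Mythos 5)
Your proof is correct and takes essentially the same route as the paper's: the paper's set $J$ of indices where consecutive items of $L$ change (component, category) pair is exactly your set of transitions, and the paper likewise bounds the number of bins straddling each such index by the bounded-space parameter $M$ before invoking \cref{lem:number-components}. One small nit: your claimed equalities $k_b-1=\#\{\text{transitions crossed by }b\}$ and $\sum_b(k_b-1)=\sum_j\#\{\text{bins crossing transition }j\}$ should be inequalities ($\leq$), since a bin may receive items from pairs $P_{j_1}$ and $P_{j_3}$ while skipping $P_{j_2}$ entirely; the direction you need still holds, so the argument goes through.
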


\begin{proof}
%We observe that in the sequence $L$ the subtours'  reduced lengths are given to the algorithm by grouping consecutively subtours in the same component and of the same category.
Let $J\subseteq[1,|L|-1]$ denote the set of integers $j\leq |L|-1$ such that the $j^{\rm th}$ element in $L$ and the $(j+1)^{\rm th}$ element in $L$ are the reduced lengths of two subtours in different components or of different categories.
From the construction of $L$ and since there are $m$ components and two categories, we have
\begin{equation}
\label{eqn:bound-J}
|J|\leq 2m.
\end{equation}

Consider a bin $b\in B_1$.
The number of bins in $B_2$ generated by $b$ is the number of pairs $(c,x)$ where $c\in \mathcal{C}$ and $x\in\{\text{passing, ending}\}$, such that $b$ contains a reduced length of a subtour in component $c$ and of category $x$.
Let $\min(b)$ (resp.\ $\max(b)$) denote the minimum (resp.\ maximum) integer $j\in [1,|L|]$ such that the $j^{\rm th}$ element in $L$ belongs to $b$.
Let $p_b$ denote the number of elements $j\in J$ such that $j\in[\min(b),\max(b)]$.
The number of bins in $B_2$ generated by $b$ is at most $1+p_b$.
Summing over all bins of $B_1$ gives
\begin{equation}
\label{eqn:B_2}
    |B_2|\leq \sum_{b \in B_1} (1 + p_b)=|B_1| + \sum_{b \in B_1} p_b.
\end{equation}
Observe that
\[\sum_{b \in B_1} p_b\leq \sum_{j\in J}
\#\big(\text{bins $b\in  B_1$ such that $j\in[\min(b),\max(b)]\big)$}.\]
For each $j\in J$, if a bin $b\in B_1$ is such that  $j\in[\min(b),\max(b)]$, then $b$ is \emph{open} at the time of $j$.
Since $B_1$ is a solution to the bounded space online bin packing, the number of open bins at the time of $j$ is at most $M$, thus the number of bins $b\in  B_1$ such that $j\in[\min(b),\max(b)]$ is at most $M$.
Therefore,
\begin{equation}
\label{eqn:p_b}
    \sum_{b \in B_1} p_b\leq |J|\cdot M\leq 2m\cdot M\leq (30/\Gamma)M\cdot\opt,
\end{equation}
where the second inequality follows from \cref{eqn:bound-J} and the last inequality follows from \cref{lem:number-components}.

The claim follows from \cref{eqn:B_2,eqn:p_b}.
\end{proof}

Combining \cref{lemma:maj-b2}, \cref{eqn:B1} and using $\Gamma=1/\eps^2$, we have
\[|B_2|\leq \beta_M \cdot\opt_{\rm BP} + \gamma_M + 30 \eps^2 M \cdot\opt.\]

Next, we show that $\opt_{\rm BP}$ is at most $\opt$.
For each tour $t$ in an optimal solution $\OPT$ to the tree DVRP, if $t$ visits components $c_1, ..., c_k$ with subtours' reduced lengths $\bar{\ell}_1, \bar{\ell}_2, ..., \bar{\ell}_k$, then using \cref{thm:comp_sub} we have $\bar{\ell}_1 + \bar{\ell}_2 + ... + \bar{\ell}_k \leq 1$. So it is possible to put the reduced lengths $\bar{\ell}_1, \bar{\ell}_2, ..., \bar{\ell}_k$ in a single bin of size $1$.
%Now we obtain a solution to the bin packing instance in the offline setting: for each tour $t \in \OPT$, put all the reduced lengths that correspond to a subtour of $t$ in a single bin.
This leads to a feasible solution to the bin packing in the offline setting that uses $\opt$ bins.
Thus $\opt_{\rm BP} \leq \opt.$
Hence
\[|B_2|\leq (\beta_M + 30 \eps^2 M )\cdot\opt + \gamma_M.\]
This completes the proof of \cref{thm:reduction}.

\subsection{Proof of \cref{thm:alpha-approx} Using the Reduction (\cref{thm:reduction})}

First, consider the case when $\opt<1/\eps^2$.
Since $\Gamma=1/\eps^2$,  \cref{alg:components} returns a single component, let it be $c$.
Then \cref{alg:local} computes an optimal solution in $c$, thus the solution returned by \cref{alg:main} is optimal.

Next, consider the case when $\opt\geq 1/\eps^2$.
Let $M=1/\eps$. Let $k$ be defined in \cref{theorem:leelee}.
By \cref{theorem:leelee}, there exists a solution to the bounded space online bin-packing problem using at most $\beta_M\cdot\opt_{\rm BP}+\gamma_M$ tours, where
\[\beta_M:=\left(\sum_{i = 1}^k \frac{1}{u_i} + \frac{M}{(M - 1)u_{k + 1}}\right)\leq \alpha + 2\eps, \text{ since } u_{k + 1} \geq M,\]
\[\gamma_M:=M-1< 1/\eps.\]
Thus by the reduction (\cref{thm:alpha-construction}), the number of tours used by \cref{alg:main} is at most \[(\beta_M + 30 \eps^2\cdot M) \cdot \opt + \gamma_M<(\alpha + 2\eps + 30 \eps)\cdot\opt + 1/\eps\leq (\alpha+33\eps)\cdot\opt,\] where the last inequality follows since $\opt\geq 1/\eps^2$.

By \cref{lem:comp_dec,lem:exact-algo}, the running time of \cref{alg:main} is $O\left(n^2\cdot \left( D / \eps^2 \right)^{O(1/\eps^2)}\right)$.

\section{Proof of \cref{thm:lower-bound}}
\label{sec:lower-bound}
For each positive integer $k$, we construct an instance $\mathcal{I}_k$ of the tree DVRP as follows.

The tree in the instance $\mathcal{I}_k$ consists of a root vertex $r$ and a set of components.
The root of each component is connected to $r$ by an edge of weight $0$.
%We will ensure that the minimum number of tours to cover the terminals in each component is $\Gamma$, so that the components correspond to the result of the decomposition in \cref{alg:components} (\cref{lem:comp_dec}).
The components are of $k$ \emph{types}.
For each $i\in[1,k]$, a type-$i$ component consists of $1+\Gamma \cdot u_i$ vertices: One vertex is the root of the component, and the remaining $\Gamma \cdot u_i$ vertices are the terminals, each connected to the root of the component by an edge of weight $x_i:=k \cdot u_{k+1} / (u_i + 1) + 1$.
There are $u_k / u_i$ components of type $i$ for each $i\in[1,k]$.
See \cref{fig:lower-bound}.
Observe that $u_{k+1} / (u_i + 1)$ and $u_k/u_i$ are both integers according to \cref{def:alpha}.
We set the distance constraint $D: = 2 k\cdot u_{k+1}$.

\begin{figure}
    \centering
    \includegraphics[scale=0.4]{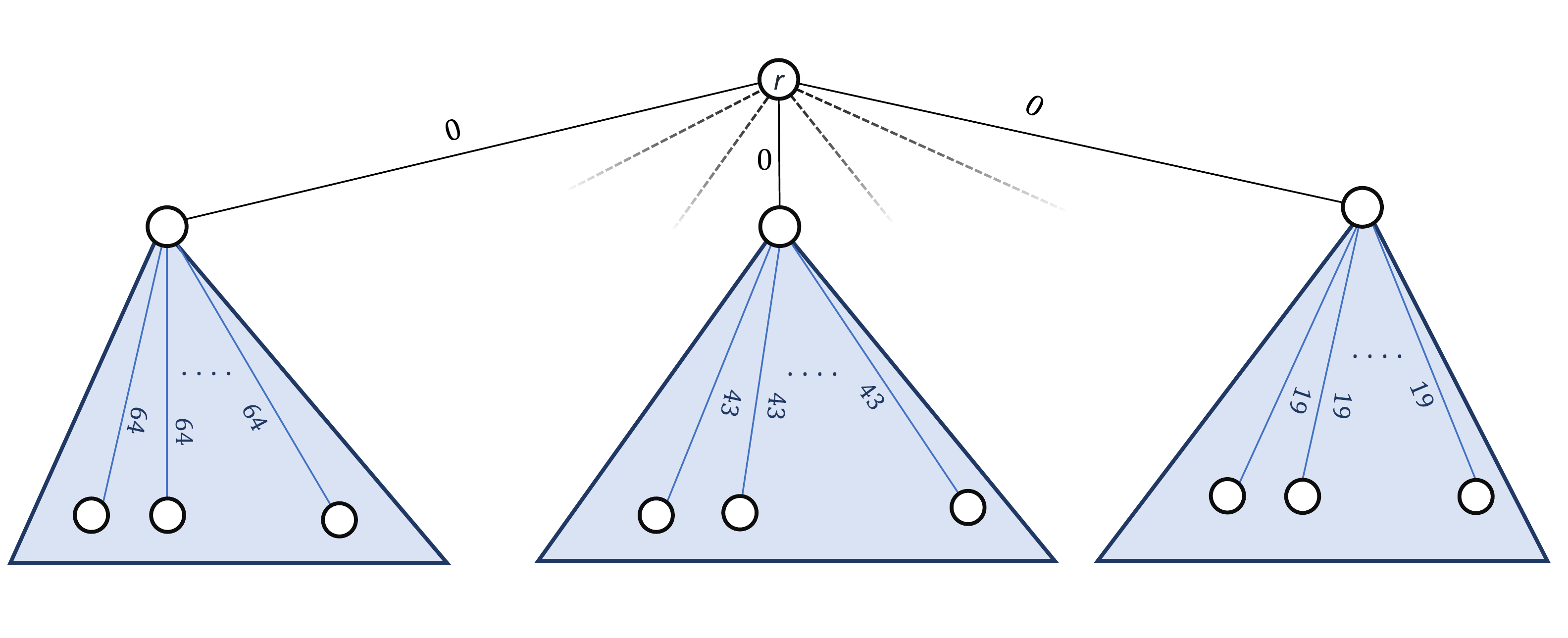}\hspace{2cm}
    \caption{Instance  $\mathcal{I}_k$ for $k$ = 3.
    From \cref{def:alpha}, $u_1 = 1$, $u_2=2$, $u_3=6$, $u_4=42$.
    The distance constraint $D= 2 \cdot k \cdot u_{k+1}  = 252$.
    In a type-1 component, the edge weight $x_1=64$.
    In a type-2 component, the edge weight $x_2=43$.
    In a type-3 component, the edge weight $x_3=19$.
    Consider a tour $t$ that visits a terminal in a type-1 component, a terminal in a type-2 component, and a terminal in a type-3 component.
    The length of $t$ is $2(x_1+x_2+x_3)= D$.
    %For the type-$2$ component, there are $u_3/u_2 = 6/2 = 3$ components, each with $\Gamma u_2 = 2 \Gamma$ vertices connected with an edge of weight $k u_4 / (u_2 + 1) + 1 = 3 \cdot 42 / 3 + 1 = 43$.
}
    \label{fig:lower-bound}
\end{figure}

We claim that there exists a feasible solution to $\mathcal{I}_k$ using $\Gamma \cdot u_k$ tours.
Consider a set of tours $S$ such that each tour $t\in S$ visits exactly $k$ terminals: for each $i\in[1,k]$, tour $t$ visits exactly one terminal from all components of type $i$.
For any $i \in [1, k]$, there are $\Gamma \cdot u_i \cdot u_k / u_i = \Gamma\cdot u_k$ terminals in all components of type $i$.
Thus $S$ consists of $\Gamma\cdot u_k$ tours.
Next, we show that each tour $t\in S$ is within the distance constraint $D$.
We have
\begin{equation}
\label{eqn:length-t}
    \length(t)=\sum_{i=1}^k 2 x_i  = 2 k \cdot u_{k+1} \sum_{i=1}^k \frac{1}{u_i +1} + 2k.
\end{equation}
Using \cref{def:alpha}, it is easy to show the following fact by induction.
\begin{fact}
\label{fact:induction}
For any positive integer $k$, we have
\[
    \frac{1}{u_{k+1}} = 1 - \sum_{i=1}^k \frac{1}{u_i + 1}.
\]
\end{fact}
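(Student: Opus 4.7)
The plan is a straightforward induction on $k$, driven by the partial-fraction identity that the recurrence $u_{k+1} = u_k(u_k+1)$ encodes.

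First I would observe that, from \cref{def:alpha}, for every $k \geq 1$,
\[
\frac{1}{u_{k+1}} \;=\; \frac{1}{u_k(u_k+1)} \;=\; \frac{1}{u_k} \;-\; \frac{1}{u_k+1},
\]
which I will use as the key telescoping identity.

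For the base case $k=1$, I would just compute: $u_2 = 1 \cdot 2 = 2$, so $\tfrac{1}{u_2} = \tfrac{1}{2} = 1 - \tfrac{1}{u_1+1}$, as desired. For the inductive step, assuming the statement holds for some $k \geq 1$, I would write
\[
1 \;-\; \sum_{i=1}^{k+1} \frac{1}{u_i+1} \;=\; \Bigl(1 - \sum_{i=1}^{k} \frac{1}{u_i+1}\Bigr) \;-\; \frac{1}{u_{k+1}+1} \;=\; \frac{1}{u_{k+1}} \;-\; \frac{1}{u_{k+1}+1},
\]
using the induction hypothesis in the last equality. By the telescoping identity applied to index $k+1$ in place of $k$, this equals $\frac{1}{u_{k+1}(u_{k+1}+1)} = \frac{1}{u_{k+2}}$, which is exactly the statement at index $k+1$.

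I do not anticipate any real obstacle here: the whole argument is just the recurrence rewritten as a two-term telescoping sum, so the induction essentially writes itself. The only thing to double-check is that the indexing matches the definition of $u_k$ in \cref{def:alpha} (in particular that $u_{k+1} = u_k(u_k+1)$ holds for all $k \geq 1$, which is immediate from the recurrence with index shift).
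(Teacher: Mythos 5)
Your proof is correct and matches the paper's approach: the paper simply asserts that the fact follows by induction from \cref{def:alpha}, and your telescoping identity $\frac{1}{u_{k+1}} = \frac{1}{u_k} - \frac{1}{u_k+1}$ is exactly the computation that induction requires. The base case and index bookkeeping all check out.
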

From \cref{eqn:length-t,fact:induction}, we have \[\length(t) = 2 k \cdot u_{k+1} \left(1-\frac{1}{u_{k+1}}\right) + 2k  = D.\]
Therefore, $S$ is a feasible solution. Hence
\begin{equation}
\label{eqn:gamma_u_k}
    \opt \leq \Gamma \cdot u_k.
\end{equation}

Next, we analyze the solution to $\mathcal{I}_k$  computed by \cref{alg:main}.
Recall that \cref{alg:main} computes an optimal solution in each component independently.
Let $c$ be any component.
Let $i\in[1,k]$ be the type of $c$.
We observe that a tour is able to cover $u_i$ terminals in $c$ but is unable to cover $u_i + 1$ terminals in $c$.
This is because, the cost to cover $u_i$ terminals in $c$ is
\begin{align*}
    u_i  \cdot 2 x_i =  D -\frac{2k\cdot u_{k+1}}{u_{i}+1}+2u_i\leq D,
\end{align*}
where the inequality follows from \cref{def:alpha}, and the cost to cover $u_i+1$ terminals in $c$ is
\begin{align*}
    (u_i + 1) \cdot 2 x_i =  D + 2 (u_i + 1) > D.
\end{align*}
Since there are $\Gamma \cdot u_i$ terminals in $c$, the minimum number of tours to cover the terminals in $c$ is $\Gamma$.

For each $i\in [1,k]$, the number of components of type $i$ is $u_k/u_i$. Thus the number of tours returned by \cref{alg:main} is     $\sum_{i=1}^k (u_k / u_i) \cdot \Gamma  = \Gamma \cdot u_k \sum_{i=1}^k 1/u_i$.

Combined with \cref{eqn:gamma_u_k}, we conclude that the approximation ratio of \cref{alg:main} on $\mathcal{I}_k$ is at least $\sum_{i=1}^k 1/u_i$, which tends to $\alpha$ when $k$ tends to $\infty$ by \cref{def:alpha}.

This completes the proof of \cref{thm:lower-bound}.

\begin{comment}
\begin{algorithm}[h]
\caption{Explicit procedure to create from $\OPT$ an approximation where each tour picks terminals from only one component }
\label{alg:hat-T}
\begin{algorithmic}
\For{each component $c \in C$}
	Make $1/\eps$ bins
	\For{$i \in [1,..,1/\eps - 1]$}
		\State Put in bin $i$ the subtours whose reduced lengths is in $[1/i, 1/(i+1)[$
	\EndFor
	\State Put in the last bin the remaining subtours (of length at most $\eps$
	\For{each bin $i$}
		\While{there is at least a tour left in bin $i$}
			\State Create a new tour that takes greedily as many subtours as possible in bin $i$
		\EndWhile
	\EndFor
\EndFor
\State Return the set containing all the tours that were created.
\end{algorithmic}
\end{algorithm}
\end{comment}

%\newpage

\bibliography{references}

\begin{thebibliography}{BWSS12}

\bibitem[Ass88]{assad}
Arjang~A. Assad.
\newblock Modeling and implementation issues in vehicle routing.
\newblock In {\em Vehicle routing: Methods and studies}, pages 7--–45, 1988.

\bibitem[BP19]{becker2019framework}
Amariah Becker and Alice Paul.
\newblock A framework for vehicle routing approximation schemes in trees.
\newblock In {\em Workshop on Algorithms and Data Structures}, pages 112--125.
  Springer, 2019.

\bibitem[BWSS12]{barker2012empirical}
Sean Barker, Timothy Wood, Prashant Shenoy, and Ramesh Sitaraman.
\newblock An empirical study of memory sharing in virtual machines.
\newblock In {\em USENIX Annual Technical Conference}, pages 273--284, 2012.

\bibitem[CL12]{crainic2012fleet}
Teodor~G. Crainic and Gilbert Laporte.
\newblock {\em Fleet management and logistics}.
\newblock Springer Science \& Business Media, 2012.

\bibitem[FS14]{friggstad2014approximation}
Zachary Friggstad and Chaitanya Swamy.
\newblock Approximation algorithms for regret-bounded vehicle routing and
  applications to distance-constrained vehicle routing.
\newblock In {\em Proceedings of the forty-sixth annual ACM Symposium on Theory
  of Computing (STOC)}, pages 744--753, 2014.

\bibitem[GJ85]{garey_computers_1985}
Michael~R. Garey and David~S. Johnson.
\newblock {\em Computers and {Intractability}: a guide to the theory of
  {NP}-{Completeness}}.
\newblock Freeman, New York, 1985.

\bibitem[GRW08]{golden2008vehicle}
Bruce Golden, S.~Raghavan, and Edward Wasil.
\newblock {\em The vehicle routing problem: latest advances and new
  challenges}, volume~43 of {\em Operations Research/Computer Science
  Interfaces Series}.
\newblock Springer, 2008.

\bibitem[JS22]{jayaprakash2021approximation}
Aditya Jayaprakash and Mohammad~R. Salavatipour.
\newblock Approximation schemes for capacitated vehicle routing on graphs of
  bounded treewidth, bounded doubling, or highway dimension.
\newblock In {\em ACM-SIAM Symposium on Discrete Algorithms (SODA)}, pages
  877--893, 2022.

\bibitem[LDN84]{laporte1984two}
Gilbert Laporte, Martin Desrochers, and Yves Nobert.
\newblock Two exact algorithms for the distance-constrained vehicle routing
  problem.
\newblock {\em Networks}, 14(1):161--172, 1984.

\bibitem[LL85]{leeleeonlinepacking}
Chan~C. Lee and Der-Tsai Lee.
\newblock A simple on-line bin-packing algorithm.
\newblock {\em Journal of the ACM (JACM)}, 32(3):562--572, 1985.

\bibitem[LSD92]{li1992distance}
Chung-Lun Li, David {Simchi-Levi}, and Martin Desrochers.
\newblock On the distance constrained vehicle routing problem.
\newblock {\em Operations Research}, 40(4):790--799, 1992.

\bibitem[MZ22]{MZ22a}
Claire Mathieu and Hang Zhou.
\newblock A {PTAS} for capacitated vehicle routing on trees.
\newblock In {\em Proceedings of the International Colloquium on Automata,
  Languages and Programming (ICALP)}, pages 95:1--95:20, 2022.

\bibitem[NR12]{nagarajan2012approximation}
Viswanath Nagarajan and R~Ravi.
\newblock Approximation algorithms for distance constrained vehicle routing
  problems.
\newblock {\em Networks}, 59(2):209--214, 2012.

\bibitem[RG16]{rampersaud2016sharing}
Safraz Rampersaud and Daniel Grosu.
\newblock Sharing-aware online virtual machine packing in heterogeneous
  resource clouds.
\newblock {\em IEEE Transactions on Parallel and Distributed Systems},
  28(7):2046--2059, 2016.

\bibitem[SSS11]{sindelar2011sharing}
Michael Sindelar, Ramesh~K. Sitaraman, and Prashant Shenoy.
\newblock Sharing-aware algorithms for virtual machine colocation.
\newblock In {\em Proceedings of the twenty-third annual ACM Symposium on
  Parallelism in Algorithms and Architectures (SPAA)}, pages 367--378, 2011.

\bibitem[TV02]{toth2002vehicle}
Paolo Toth and Daniele Vigo.
\newblock {\em The Vehicle Routing Problem}.
\newblock SIAM, 2002.

\end{thebibliography}

%\newpage

\appendix

%\section*{Appendix}

\section{NP-Hardness}
\label{sec:NP-hard}
\begin{lemma}
The Tree DVRP is strongly NP-hard.
\end{lemma}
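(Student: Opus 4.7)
The plan is to reduce from \textsc{3-Partition}, which is well-known to be strongly NP-hard. Recall the problem: given $3m$ positive integers $a_1, \dots, a_{3m}$ with $B/4 < a_i < B/2$ and $\sum_i a_i = mB$, decide whether the multiset $\{a_1, \dots, a_{3m}\}$ can be partitioned into $m$ triples, each summing to exactly $B$.

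Given such an instance, I would construct the following tree DVRP instance. Take a star: a root $r$ (the depot) with $3m$ leaves $v_1, \dots, v_{3m}$, where the edge $(r, v_i)$ has integer weight $a_i$. Let the terminals be $U = \{v_1, \dots, v_{3m}\}$ and set the distance constraint $D := 2B$. Then I would argue that this instance admits a feasible solution with at most $m$ tours if and only if the \textsc{3-Partition} instance is a YES instance.

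The key observations for the equivalence are standard. Any tour visiting a subset $S \subseteq U$ has length exactly $2 \sum_{v_i \in S} a_i$, since each edge is traversed once in each direction. The constraint $\length(t) \leq D = 2B$ thus becomes $\sum_{v_i \in S} a_i \leq B$. Combined with the bound $a_i > B/4$, each tour visits at most $3$ terminals, so covering all $3m$ terminals requires at least $m$ tours. Hence a solution with exactly $m$ tours exists if and only if the terminals can be partitioned into $m$ triples of edge-weight sum exactly $B$, which is exactly a \textsc{3-Partition} solution.

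Finally, I would observe that all edge weights and the value $D$ are bounded by $2B$, which is polynomial in the unary encoding of the \textsc{3-Partition} input. Therefore the reduction is polynomial even when the numeric input is encoded in unary, so it transfers strong NP-hardness from \textsc{3-Partition} to the tree DVRP. There is no real obstacle here; the only points to double-check are that the constructed graph is indeed a tree (a star is) and that the edge weights remain non-negative integers, both of which hold by construction.
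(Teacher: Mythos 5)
Your proof is correct. The construction is the same as the paper's --- a star rooted at the depot with one leaf per item, edge weight equal to the item size, and distance constraint twice the capacity bound --- but you reduce from \textsc{3-Partition} while the paper reduces directly from bin packing (with capacity $M$ and $D=2M$), so that a set of tours is in bijection with a set of bins and the optimization problems coincide outright. Your route needs the small extra counting argument that $m$ tours, each covering at most $3$ of the $3m$ terminals, must induce a partition into triples each summing to exactly $B$; that argument is complete as you state it (the only point worth making explicit is that in DVRP a terminal may be visited by several tours, but the counting forces each tour to cover exactly three distinct terminals, so the cover is in fact a partition). Both reductions correctly preserve strong NP-hardness since all numbers produced are bounded by $2B$ (resp.\ $2M$); the paper's version is marginally shorter because the equivalence with bin packing requires no case analysis, while yours has the mild advantage of pinning down hardness already for the decision question ``are $m$ tours enough?''.
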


\begin{proof}
We reduce the  bin packing problem to the tree DVRP.
Consider an instance of the bin packing problem with an  integer bin capacity $M$ and $n$ items of sizes $a_1,\dots,a_n$, where $a_i\in [0,M]$ for each $i\in[1,n]$.
The bin packing problem looks for a partition of the $n$ items into the minimum number of bins such that in each bin, the sum of the item sizes is at most $M$.
We construct an instance of the tree DVRP as follows.
There is a depot and $n$ terminals.
For each $i\in[1,n]$, there is an edge between the depot and the $i^{\rm th}$ terminal with weight $a_i$.
Let $D = 2M$.
It is easy to see that a solution to the bin packing instance is equivalent to a soution to the tree DVRP instance.
Since the bin packing problem is strongly NP-hard~\cite{garey_computers_1985}, the tree DVRP is strongly NP-hard.
\end{proof}

\section{Component Decomposition Algorithm}
\label{sec:alg-component}
For each vertex $v$ of the tree, let $T(v)$ denote the subtree rooted at $v$.
For any subgraph $H$ of the tree, let $U_H$ denote the set of terminals in $U$ that belong to the subgraph $H$.

The algorithm is given in \cref{alg:components}.

\begin{algorithm}[h]
\caption{Decomposition algorithm \textsc{Decompose}$^{(\Gamma)}$ parameterized by $\Gamma$ (see \cref{lem:comp_dec})}
\label{alg:components}
\hspace*{\algorithmicindent} \textbf{Input} A tree $T$ rooted at $r$, a distance constraint $D$ \\
\hspace*{\algorithmicindent} \textbf{Output} A decomposition of $T$ into components
\begin{algorithmic}[1]
\State $\{$Leaf components$\}:=\{T(v): v$ least deep vertex s.t. \textsc{Solve}$^{(\Gamma)}(T(v),D-2\cdot\dist(r,v),U_{T(v)}) \leq \Gamma$
%{SOLVE($T(v)$) can be covered by at most 2\Gamma tours}\}$
\State $T'\gets$ subtree spanning $\{ r\}\cup\{$roots of leaf components$\}$
\For{each maximal downward $v_1$-to-$v_2$ path in $T'$ whose internal vertices  have only one child in $T'$}
    \State Let $v'_1$ be the child of $v_1$ on the $v_1$-to-$v_2$ path.
    \While{$\textsc{Solve}^{(\Gamma)}(T(v)\setminus T(v_2),D-2\cdot\dist(r,v),U_{T(v)\setminus T(v_2)}) \leq \Gamma$}
        \State $v\gets$ least deep vertex on the $v_1'$-to-$v_2$ path such that \\ \hspace{45 pt} $\textsc{Solve}^{(\Gamma)}(T(v)\setminus T(v_2),D-2\cdot\dist(r,v),U_{T(v)\setminus T(v_2)}) \leq \Gamma$
        \State Define internal component $(T(v)\setminus T(v_2))$ with
        exit vertex $v_2$
        \State $v_2\gets v$
    \EndWhile
        \State Define internal component  $(v_1,v'_1)\cup (T(v'_1)\setminus T(v_2))$ with
        exit vertex~$v_2$
\EndFor
\end{algorithmic}
\end{algorithm}

\paragraph*{Running time}
The number of calls on $\textsc{Solve}^{(\Gamma)}$ is $O(n)$.
Each call takes time $O(n\cdot (2\Gamma)^{\Gamma}\cdot D^{3\Gamma})$ by \cref{lem:exact-algo}.
Thus the overall running time of \cref{alg:components} is $O(n^2\cdot (2\Gamma)^{\Gamma}\cdot D^{3\Gamma})$.

\section{Proof of \cref{lem:exact-algo}}
\label{sec:proof-exact-algo}
Let $\tilde T=(\tilde V,\tilde E)$ be a tree with root $\tilde r$.
Let $\tilde n$ denote the number of vertices in $\tilde T$.
Let $\tilde U\subseteq \tilde V$ be the set of terminals.
The goal is to cover the terminals in $\tilde U$ using a minimum number of tours of length at most $\tilde D$ each.

Let $\Gamma$ be a positive integer.
We design a dynamic program that computes an optimal solution of at most $\Gamma$ tours if such a solution exists.
See \cref{alg:local}.

\begin{algorithm}
\caption{Dynamic program \textsc{Solve}$^{(\Gamma)}$ parameterized by $\Gamma$ (see \cref{lem:exact-algo})}
\label{alg:local}
\hspace*{\algorithmicindent} \textbf{Input} A tree $\tilde T$ rooted at $\tilde r$, a set of terminals $\tilde U$, a distance constraint $\tilde D$\\
\hspace*{\algorithmicindent} \textbf{Output}
$\min \{ |S|: S \text{ is a feasible set of tours and }|S|\leq \Gamma \}$
\begin{algorithmic}[1]
\State Preprocess the tree $\tilde T$ so that each leaf is a terminal and each non-leaf vertex of $\tilde T$ has exactly two children \Comment{\cref{sec:problem}}
\For{each configuration $(v, A)$}
    \State $valid(v,A)=false$
\EndFor
    \For{each terminal $v$ in $\tilde T$}\Comment{Case 1}
    \For{each list $A$ such that $\ell(A)\in [1,\Gamma]$ and every element in $A$ equals 0}
    \State $valid(v,A)=true$
    \EndFor
    \For{each non-terminal $v$ of $\tilde T$ in bottom-up order }\Comment{Case 2}
        \State Let $v_1$ and $v_2$ denote the two children of $v$
        \For{each lists $A, A_1, A_2$ such that $(v_1,A_1)$ and $(v_2,A_2)$ are valid configurations}
	        \If{
        $(v_1,A_1)$, $(v_2,A_2)$, and $(v,A)$ are \emph{compatible}} \Comment{\cref{def:compatible}}
	        \State $valid(v,A)\gets true$
            \EndIf
        \EndFor
    \EndFor
    \EndFor
\State \Return $\displaystyle \min \{ \ell(A) : (\tilde r,A) \ \text{is valid} \}$
\end{algorithmic}
\end{algorithm}

To begin with, using the preprocessing step in \cref{sec:problem}, we transform the tree $\tilde T$ so that the terminals are the same as the leaves in $\tilde T$ and every non-leaf vertex in $\tilde T$ has exactly two children.

We compute values at \emph{configurations} (\cref{def:local-config}), which are solutions restricted to a subtree of $\tilde T$.
%Recall that, by \cref{lem:comp_dec}, for any component, the maximum number of tours to cover the terminals in that component is at most $\Gamma$.

\begin{definition}[configurations]
\label{def:local-config}
%Let $S\in \mathcal{C}$ be a subtree rooted at $r_s$.
%Let $D_S \geq 0$, $D_S \leq D$.
A \emph{configuration} $(v,A)$ is defined by a vertex $v\in \tilde T$ and a list $A$ of $\ell(A)$ integers
$(s_1,s_2,\ldots ,s_{\ell(A)})$ such that
\begin{itemize}
\item
$\ell(A)\leq \biggamma$;
\item
for each $i\in[1, \ell(A)]$,  $s_i$ is an integer in $[0,\tilde D]$.
\end{itemize}
%When $v=r_c$, the local configuration $(r_c,A)$ is also called a \emph{local configuration in the component $c$}.
\end{definition}

We say that a configuration $(v,A)$ is \emph{valid} if it is possible to cover the terminals in the subtree of $\tilde T$ rooted at $v$ with a collection of $\ell(A)$ subtours, such that each subtour starts and ends at $v$ and the $i$-th subtour has length $s_i$.
Note that ${\ell(A)}$ is equal to the total number  of subtours in the collection. Thus the objective is to find a valid configuration $(\tilde r,A)$ with $\ell(A)$ minimum.

%If $c$ has an exit vertex $e_c$, we first modify $c$ as follows: we simply remove $e_c$; and if there now is an internal vertex that has only one child in $c$, then we eliminate it by contracting the two edges incident to it into a single edge with weight equal to the sum of the two weights. From now on we assume that there is no exit vertex.

Let $v$ be any vertex in $\tilde T$.
We decide whether the configuration $(v,A)$ is valid according to one of the two cases.

\subsubsection*{Case 1: $v$ is a leaf vertex  in $\tilde T$}
Then $v$ is a terminal. The configuration $(v,A)$ is \emph{valid} if and only if $\ell(A)\in [1,\Gamma]$ and every element in the list $A$ equals 0.

\subsubsection*{Case 2: $v$ is a non-leaf vertex in $\tilde T$}
Let $v_1$ and $v_2$ be the two children of $v$ in $\tilde T$.
%We assume without loss of generality that a non-terminal $v$ has two children in $c$. Indeed, if $v$ has no child in $c$ (which happens when $v$ is an exit vertex of $c$), then we simply remove $v$; and if $v$ has only one child in $c$, then we eliminate $v$ by contracting the two edges incident to $v$ into a single edge \cm{with weight equal to the sum of the two weights}.

\begin{definition}[compatibility]
\label{def:compatible}
Let $w_1$ (resp.\ $w_2$) denote the weight of the edge between $v$ and $v_1$ (resp.\ $v_2$).
We say that the configurations $(v_1,A_1)$, $(v_2,A_2)$, and $(v,A)$ are \emph{compatible} if there is a partition $\mathcal{P}$ of $A_1\cup A_2$ into parts, each part consisting of one or two elements such that at most one element is from $A_1$ (resp.\ $A_2$), and a one-to-one correspondence between every part in $\mathcal{P}$ and every element in $A$ such that:
\begin{itemize}
    \item a part in $\mathcal{P}$ consisting of one element $s^{(1)}\in A_1$ corresponds to an element $s$ in $A$ if and only if $s^{(1)}+2w_1=s$;
    \item a part in $\mathcal{P}$ consisting of one element $s^{(2)}\in A_2$ corresponds to an element $s$ in $A$ if and only if $s^{(2)}+2w_2=s$;
    \item a part in $\mathcal{P}$ consisting of two elements $s^{(1)}$ and $s^{(2)}$ corresponds to an element $s$ in $A$ if and only if $s=s^{(1)}+2w_1+s^{(2)}+2w_2$.
\end{itemize}
\end{definition}

The configuration $(v,A)$ is \emph{valid} if and only if there exist a valid configuration $(v_1,A_1)$ and a valid configuration $(v_2,A_2)$ that are compatible with $(v,A)$.

%The algorithm is very simple.

\paragraph*{Running time}
For each vertex $v\in \tilde T$, since $\ell(A)\leq \Gamma$, the number of configurations $(v,A)$ is at most $\tilde n\cdot \tilde D^{\Gamma}$.
%$n^{O_\eps(1)}$.
For fixed $(v_1,A_1),(v_2, A_2)$, and $(v,A)$, to check  compatibility, there are at most $(2\Gamma)^{\Gamma}$ partitions of $A_1\cup A_2$ into parts.
Thus the overall running time of \cref{lem:exact-algo} is $O(\tilde n\cdot (2\Gamma)^{\Gamma}\cdot  \tilde D^{3\Gamma})$.

\end{document}